\documentclass[aps,prx,reprint,superscriptaddress,amsmath,amssymb]{revtex4-2}
\usepackage{graphicx}
\usepackage{amsthm}
\usepackage{thmtools}
\usepackage{url}
\usepackage{tikz}
\usepackage{relsize}
\usepackage{latexsym}
\usepackage{amssymb}
\usepackage{amsmath}
\usepackage{mathtools}
\usepackage{amsmath} 
\usepackage{amsfonts} 
\usepackage{upgreek}
\usepackage{bm}
\usepackage{multirow}
\usepackage{enumitem}
\usepackage{color}
\usepackage[colorlinks, citecolor=blue, linkcolor=blue]{hyperref}

\usepackage{physics}
\usepackage{tcolorbox}
\usepackage{manfnt}
\usepackage{lipsum}
\usepackage[ruled,linesnumbered]{algorithm2e}
\usepackage{tabularx}
\usepackage{nicematrix}
\usepackage{braket}
\usepackage[normalem]{ulem} 
\usepackage{enumitem}

\newcommand{\cC}{\mathcal{C}}

\newcommand{\cO}{ {\cal O} }

\newcommand{\cH}{ {\cal H} }

\newcommand{\ii}{\mathrm{i}}

\newcommand{\U}{\mathrm{U}}

\newcommand{\Var}{\mathrm{Var}}
\newcommand{\Cov}{\mathrm{Cov}}

\newcommand{\appref}[1]{Appendix\,\ref{#1}}
\newcommand{\eqnref}[1]{Eq.\,\eqref{#1}}

\newcommand{\figref}[1]{Fig.\,\ref{#1}}

\newcommand{\rd}{\partial}

\newcommand{\vdagger}{{\vphantom{\dagger}}}

\newcommand{\tcell}[2]{%
  \begin{minipage}[t]{#1}%
  \raggedright
  #2%
  \end{minipage}%
}

\newcommand{\rAngle}{\rangle \hspace{-2pt} \rangle }
\newcommand{\lAngle}{\langle \hspace{-2pt} \langle }
 
\renewcommand{\tr}{\mathrm{tr}}

\theoremstyle{plain} 
\newtheorem{theorem}{Theorem}%[section] 

\newtheoremstyle{axiomwithbreak} % Name of the style
  {3pt}   % Space above
  {3pt}   % Space below
  {} % Body font
  {}      % Indent amount
  {\bfseries} % Theorem head font
  {.}     % Punctuation after theorem head
  {\newline }   % Space after theorem head (newline forces vertical separation)
  {}      % Theorem head spec

\theoremstyle{axiomwithbreak}

\theoremstyle{definition}               
\newtheorem{definition}[theorem]{Definition}

\newtheorem{example}[theorem]{Example} 
\makeatother

\theoremstyle{plain}
\newtheorem{lemma}[theorem]{Lemma}

\newtheorem*{theorem*}{Theorem}

\begin{document}

\title{Charge Scrambling in Strong-to-Weak Spontaneous Symmetry Breaking}

\author{Jong Yeon Lee}
\email{jongyeon@illinois.edu}
\affiliation{Physics Department, University of Illinois at Urbana-Champaign, Urbana, Illinois 61801, USA}
\affiliation{Korea Institute for Advanced Study, Seoul 02455, South Korea}

\date{\today}
\begin{abstract} 
    Strong-to-weak spontaneous symmetry breaking (SWSSB) is diagnosed by nonlinear correlators, but its direct static implication for conserved charge fluctuations is not automatic.
    We show that, for continuous symmetries, long-range R\'enyi-1 correlator, together with a sufficiently rapid approach to its nonzero asymptotic value, forces subsystem charge indefiniteness: the block-charge variance has an extensive lower bound; equivalently, the truncated symmetry expectation has extensive curvature. 
    This gives a precise static fluctuation footprint of charge scrambling.
    We construct examples to show that the implication is conditional and non-reversible: dephased superfluids retain R\'enyi-1 SWSSB with subextensive charge variance when the R\'enyi-1 tail is too slow, while sparse fixed-charge projectors have extensive charge variance but no local charge-transfer Rényi-1 order or long-range conditional mutual information.
    Finally, we introduce a \emph{twist overlap} correlator, which serves as an analogue of charge variance applicable to both discrete and continuous symmetries. 
    This naturally decomposes local block-charge fluctuations into strong- and weak-symmetry channels. We found that the weak-symmetry channel isolates coherent charge fluctuations and is directly related to the Wigner--Yanase skew information. 
    Taken together, these results give a unified understanding for distinguishing nonlinear SWSSB order, local charge indefiniteness, and coherent charge fluctuations.
\end{abstract} 
\maketitle

\section{Introduction}

In mixed quantum states, symmetry comes in two distinct forms. A state may be \emph{weakly symmetric}, meaning that the density matrix is invariant under the symmetry action, or \emph{strongly symmetric}, meaning that the symmetry is respected at the level of each pure-state component and all components carry the same symmetry representation~\cite{Bu_a_2012,PhysRevA.89.022118}. This distinction gives rise to a novel form of spontaneous symmetry breaking in which strong symmetry is lost in the thermodynamic limit while weak symmetry remains intact~\cite{Lee2023PRXQ, LeeU2023,kim2024errorthresholdsykcodes,sala2024spontaneousstrongsymmetrybreaking,lessa2024strongtoweakspontaneoussymmetrybreaking,Ichinose2024,gu2024spontaneoussymmetrybreakingopen, fan2023diagnostics,Lee_2025,Zhang2024PRB,Wightman2025,Weinstein2025,Zhang2025,Feng2025,Ichinose2025,Guo_2025,ziereis2025strongtoweaksymmetrybreakingphases,lu2025holographicdualitybulktopological,vijay2025holographicallyemergentgaugetheory,vijay2025informationcriticalphasesdecoherence,
temkin2025chargeinformedquantumerrorcorrection,
kusuki2026resourcetheoreticquantifiersweakstrong,hauser2026strongtoweaksymmetrybreakingopen}. This phenomenon, dubbed strong-to-weak spontaneous symmetry breaking (SWSSB), was first identified for a discrete $\mathbb{Z}_2$ symmetry~\cite{Lee2023PRXQ} and has since been extended to continuous $U(1)$~\cite{LeeU2023} and fermion parity~\cite{kim2024errorthresholdsykcodes}. Unlike conventional SSB, SWSSB is invisible to ordinary order parameters and instead requires information-theoretic nonlinear diagnostics.

Despite this progress, the physical content of SWSSB remains only partially understood. Several complementary interpretations have emerged. The original work~\cite{Lee2023PRXQ} related SWSSB to the divergent sensitivity of a mixed state to local symmetry-breaking channels. This suggests that, in an SWSSB phase, symmetry information is not stored locally but is delocalized throughout the mixed state: a local symmetry-breaking perturbation can therefore produce a global change. Consistent with this picture, Ref.~\cite{lessa2024strongtoweakspontaneoussymmetrybreaking} showed that the effect of a local symmetry-breaking measurement in an SWSSB phase cannot be reversed by any finite-depth local symmetric channel. For continuous symmetries, SWSSB has also been associated with dynamics: a strongly symmetric Lindbladian with SWSSB steady states was shown to support a long-lived diffusive hydrodynamic mode~\cite{LeeU2023,Moudgalya2024}.

\begin{figure}[!t]
    \centering
    \includegraphics[width=0.99\linewidth]{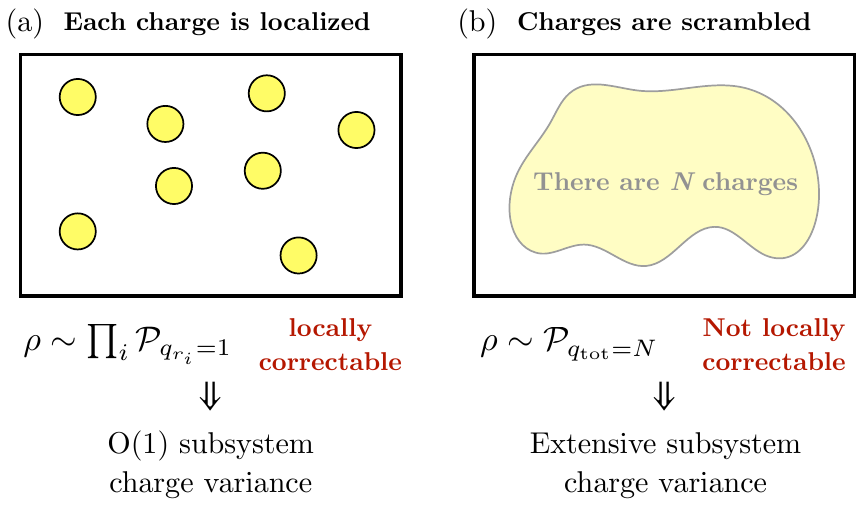}
    \caption{{\bf Unscrambled and scrambled states}. (a)  When the charge information is stored locally, the state is strongly symmetric and a local perturbation can be repaired locally. (b) In the uniform fixed-charge example, the charge is delocalized and the state exhibits SWSSB: a local symmetry-breaking channel produces an effect that cannot be reversed by a finite-depth local symmetric channel.}
    \label{fig:scramble}
    \vspace{-10pt}
\end{figure}

 \begin{table*}[!t]
  \footnotesize
  \renewcommand{\arraystretch}{1.25}
  \begin{tabular*}{\textwidth}{@{\extracolsep{\fill}}lllll@{}}
    \hline\hline
    \tcell{0.16\textwidth}{State} &
    \tcell{0.20\textwidth}{Long-range diagnostic} &
    \tcell{0.095\textwidth}{$\mathrm{Var}(Q_A)$} &
    \tcell{0.13\textwidth}{Charge coherence} &
    \tcell{0.32\textwidth}{} \\
    \hline

    \tcell{0.16\textwidth}{Uniform ensemble}
    &
    \tcell{0.20\textwidth}{$C(r)=0$ \\ $R(r)= \nu(1-\nu)$}
    &
    \tcell{0.095\textwidth}{extensive}
    &
    \tcell{0.13\textwidth}{$I_{\rm WY}=0$}
    &
    \tcell{0.32\textwidth}{SWSSB state with scrambled charge yet no ordinary LRO.}
    \\

    \tcell{0.16\textwidth}{Dicke state}
    &
    \tcell{0.20\textwidth}{$C(r) = \nu(1-\nu)$ \quad $R(r) = \nu^2(1-\nu)^2$}
    &
    \tcell{0.095\textwidth}{extensive}
    &
    \tcell{0.13\textwidth}{$I_{\rm WY}=\mathrm{Var}(Q_A)$}
    &
    \tcell{0.32\textwidth}{Extensive charge variance is compatible with genuine symmetry breaking.}
    \\

    \tcell{0.16\textwidth}{Superfluid or XY ferromagnet}
    &
    \tcell{0.20\textwidth}{Long-range $C(r)$ and $R(r)$ with $1/r^{d-1}$ tail.}
    &
    \tcell{0.095\textwidth}{subextensive}
    &
    \tcell{0.13\textwidth}{$I_{\rm WY}=\mathrm{Var}(Q_A)$}
    &
    \tcell{0.32\textwidth}{Ordinary SSB example motivating the dephased construction below.}
    \\

    \tcell{0.16\textwidth}{Dephased superfluid}
    &
    \tcell{0.20\textwidth}{$C(r)=0$, while $R(r) \sim R_\infty + a/r^{d-1}$.}
    &
    \tcell{0.095\textwidth}{subextensive}
    &
    \tcell{0.13\textwidth}{$I_{\rm WY}=0$}
    &
    \tcell{0.32\textwidth}{SWSSB counterexample showing that fast R\'enyi convergence is needed.}
    \\

    \tcell{0.16\textwidth}{Sparse projector $\rho_\Omega$}
    &
    \tcell{0.20\textwidth}{no long-range $R(r)$; no SWSSB.}
    &
    \tcell{0.095\textwidth}{extensive}
    &
    \tcell{0.13\textwidth}{$I_{\rm WY}=0$}
    &
    \tcell{0.32\textwidth}{Counterexample showing that the converse of Theorem~\ref{thm:main} is false.}
    \\[9pt]

    \hline\hline
  \end{tabular*}
  \caption{\label{tab:comparison_examples}
  Compact comparison of the representative states. $C(r)$ and $R(r)$ denote the ordinary and R\'enyi-1 correlators. The behavior of $\textnormal{Var}(Q_A)$ is consistent with Theorem~\ref{thm:main}, which links sufficiently fast convergence of long-range $R(r)$ to extensive variance. For the sparse projector state, the conditional mutual information diagnostic also vanishes, see \appref{app:CMI}.}
\end{table*}

This dynamical connection raises a sharper question. The hydrodynamic mode is a statement about linearized charge dynamics, whereas SWSSB is diagnosed by nonlinear correlators of the mixed state. Thus one should not expect SWSSB alone to imply a hydrodynamic response without identifying the additional static fluctuation content carried by the steady state. At the microscopic level, the long-lived modes found in strongly symmetric Lindbladians are closely tied to the low-momentum static charge structure factor of the steady state~\cite{LeeU2023}. At the phenomenological field-theory level, related statements are often phrased in terms of a charge susceptibility~\cite{Akyuz_2024,luca2025,Huang2025}. These two quantities, however, are conceptually distinct. The static structure factor is an equal-time property of the state alone, while the susceptibility is a causal response function and therefore depends on additional dynamical or Hamiltonian data. In thermal equilibrium they are related by fluctuation-dissipation relations, but no such relation is guaranteed for generic mixed states or nonequilibrium steady states. 
It is therefore important to better understand the purely static part that plays the key role in connecting SWSSB and hydrodynamics.

A motivating example is illustrated in Fig.~\ref{fig:scramble}. Consider first a product state with a definite local charge configuration $\otimes_i P_{q_i}^{r_i}$, where $P_{q_i}^{r_i}$ projects onto charge $q_i$ at site $r_i$. In this case, a local perturbation can in principle be diagnosed and repaired locally, and subsystem charge fluctuations remain subextensive. By contrast, the fixed-charge state $\rho_Q\propto P_Q$, where $P_Q$ projects onto the total-charge-$Q$ sector, delocalizes charge information over all microscopic configurations compatible with the global constraint. The charge information is globally encoded with nontrivial conditional mutual information~\cite{lessa2024strongtoweakspontaneoussymmetrybreaking, yang2025topologicalmixedstatesphases}, and exhibits extensive subsystem charge fluctuations. The state gives a prototypical example of $U(1)$ SWSSB~\cite{LeeU2023}, suggesting a close relation between SWSSB and local charge indefiniteness. Related studies have also pointed to such a connection from a charge-sharpening perspective~\cite{PhysRevLett.129.200602, PhysRevX.12.041002, Singh_2026, vijay2025holographicallyemergentgaugetheory}. However, a general equivalence remained elusive.

In this work we isolate a static fluctuation component of charge scrambling. For a continuous symmetry, the charge in a region $A$ is locally indefinite when the truncated symmetry expectation $\tr \rho e^{i \theta Q_A}$ decays with the size of $A$; infinitesimally this is measured by the variance of $Q_A$. 
We prove that a long-range R\'enyi-1 correlator, together with sufficiently rapid convergence to its asymptotic value, enforces such subsystem charge indefiniteness. The converse, however, is not true in general. We construct examples with extensive charge fluctuations but no SWSSB, as well as examples in which SWSSB does not enforce an extensive charge variance because the approach to the long-distance limit is slow. 
These results clarify the precise conditions under which SWSSB implies a nonzero static structure factor.

We then extend the discussion to discrete symmetries by introducing a R\'enyi-1 \emph{twist overlap}, a discrete analogue of charge variance. Much like the conventional order-disorder competition, the twist overlap competes with the nonlinear correlator and thereby provides a complementary diagnostic of SWSSB. 
It also connects directly to the Wigner-Yanase skew information, which isolates the coherent part of block-charge fluctuations~\cite{WY1963,Luo2005,Hansen2008}. Together, these results provide a unified framework in which the R\'enyi-1 correlator diagnoses nonlinear symmetry breaking, subsystem charge variance measures the total fluctuation budget, and the Wigner-Yanase skew information extracts its genuinely coherent component.

% \vspace{5pt} \emph{Main result}. 

\section{Main Result}

Given a density matrix $\rho$, its canonical purification can be written as
\begin{align}
  \Vert \sqrt{\rho} \rAngle := (\sqrt{\rho} \otimes \mathbf{1}) \sum_i | i \rangle_L | \bar{i} \rangle_R,
\end{align}
which lives in the doubled Hilbert space. 
In the doubled Hilbert space, the strong symmetry $G$ becomes doubled, i.e., $G_L \times G_R$.
When $G$ is $\U(1)$ symmetry, we can redefine two generators as the sum and difference of the left and right generators (See \appref{app:doubled} for conventions): 
\begin{align}
  Q^+ := Q_{L} + Q_{R}, \quad Q^- := Q_{L} - Q_{R}.
\end{align}
Therefore $Q^-$ is the weak symmetry generator of $\rho$~\footnote{But it does not mean that $Q^+$ is the strong symmetry generator.}:
\begin{align}
    e^{i \theta Q^-} \Vert \sqrt{\rho} \rAngle = (U^{\vphantom{*}} \otimes U^*) \Vert \sqrt{\rho} \rAngle 
    = \Vert \sqrt{U \rho U^\dagger} \rAngle.
\end{align} 
We consider the charge $Q_A$ contained in a subsystem $A$. Consider a truncated symmetry action $U_A = e^{i \theta Q_A}$, where $Q_A := \sum_{r \in A} q_r$ measures the total charge within $A$. In general, how \emph{scrambled} the charge is can be diagnosed by how well-defined the truncated symmetry action $U_A$ is, whose expectation value is given by
\begin{align} \label{eq:truncated_action}
  |\langle e^{i \theta Q_A} \rangle| = 
  e^{-\frac{\theta^2}{2} \textnormal{Var}(Q_A) + \cO(\theta^4)},
\end{align} 
where $\textnormal{Var}(Q_A) := \langle Q_A^2 \rangle - \langle Q_A \rangle^2$. Therefore, the subsystem charge variance $\textnormal{Var}(Q_A)$ quantifies the local indefiniteness of the charge in $A$. In what follows, we establish the relation between the behavior of $\textnormal{Var}(Q_A)$ and the nonlinear diagnostics of SWSSB.

Consider a charge \emph{transfer} operator $T_{xy}$ supported near points $x$ and $y$, with $x \in A$ and $y \notin A$ such that 
\begin{align}
  [Q_A, T_{xy}] = Q_A T_{xy} - T_{xy} Q_A = T_{xy}.
\end{align}
In other words, $T_{xy} \sim b_x^\dagger b_y^\vdagger$ transfers a charge from $y$ to $x$. Then we define the R\'enyi-1 correlator as: 
\begin{align}
  R_{xy} := \tr \big( \sqrt{\rho} T^\vdagger_{xy} \sqrt{\rho} T_{xy}^\dagger ) = \lAngle \sqrt{\rho} \Vert T^{\vphantom{*}}_{xy} \otimes T_{xy}^* \Vert \sqrt{\rho} \rAngle.
\end{align}
The long-range order in $R_{xy}$, which indicates the $\U(1)^+$ SSB in the purified state, implies the SWSSB of the original density matrix~\cite{Weinstein2025}. The local $\U(1)^+$ order parameter $\tr(\sqrt{\rho} b_x^\dagger \sqrt{\rho} b_x)\,{=}\,0$~\footnote{$T$ denotes the transpose in the occupation-number basis used to define the vectorization. In this basis $c_x^T=c_x^\dagger$ because the matrix elements of $c_x$ are real. Thus $c_x^T$ creates charge on the second copy: $[Q^T,c_x^T]=c_x^T$. The statement is basis-dependent; what is basis-independent is the doubled-space construction once the transpose and the second-copy charge operator are transformed consistently.} due to strong symmetry.
% $\lAngle \sqrt{\rho} \Vert b_x^\dagger \otimes b_x^T \Vert \sqrt{\rho} \rAngle = 0$

The fluctuations of the purified and original states are related by the following lemma (proof in \appref{app:proofs}):
\begin{lemma} \label{lem:decomposition_var}
  The charge variance in a region $A$ can be decomposed as the sum of charge variances of the purified state in the doubled Hilbert space: 
  \begin{align}
    \textnormal{Var}(Q_A) = \frac{1}{4} \textnormal{Var}(Q_A^+) + \frac{1}{4} \textnormal{Var}(Q_A^-). 
  \end{align}
\end{lemma}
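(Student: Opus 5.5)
The plan is to express everything in the doubled Hilbert space and use the identity $Q_A = \tfrac12(Q_A^+ + Q_A^-)\big|_{L}$, that is, $Q_{A,L} = \tfrac12(Q_A^+ + Q_A^-)$ and $Q_{A,R} = \tfrac12(Q_A^+ - Q_A^-)$. Expanding both variances then gives
\begin{align}
\tfrac14\textnormal{Var}(Q_A^+) + \tfrac14\textnormal{Var}(Q_A^-) = \tfrac12\big[\textnormal{Var}(Q_{A,L}) + \textnormal{Var}(Q_{A,R})\big].
\end{align}
This holds because the cross terms $\pm 2\,\Cov(Q_{A,L},Q_{A,R})$ cancel in the sum, and $Q_{A,L}$ and $Q_{A,R}$ commute, acting on different tensor factors. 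All expectations here are taken in $\Vert\sqrt\rho\rAngle$. The lemma therefore reduces to showing $\textnormal{Var}(Q_{A,L}) = \textnormal{Var}(Q_{A,R}) = \textnormal{Var}_\rho(Q_A)$.

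For the left copy this is immediate. For any operator $O$, $\lAngle\sqrt\rho\Vert O\otimes\mathbf 1\Vert\sqrt\rho\rAngle = \tr(\sqrt\rho\, O\sqrt\rho) = \tr(\rho O)$, so both the first and second moments of $Q_{A,L}$ reproduce those of $Q_A$ in $\rho$.

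For the right copy, I will use the transpose trick $(\mathbf 1\otimes M)\sum_i|i\rangle|\bar i\rangle = (M^T\otimes\mathbf 1)\sum_i|i\rangle|\bar i\rangle$. Together with the convention of \appref{app:doubled}, the right-copy charge is represented so that $Q_R$ acts as $Q^T$ on the vectorized state. This gives $\lAngle\sqrt\rho\Vert\mathbf 1\otimes M\Vert\sqrt\rho\rAngle = \tr(\sqrt\rho\sqrt\rho M^T) = \tr(\rho M^T)$, possibly up to a sign fixed by the convention $Q^- = Q_L - Q_R$ generating $U\otimes U^*$. Since $Q_A$ is diagonal in the occupation basis used for vectorization, $Q_A^T = Q_A$ and $(Q_A^2)^T = Q_A^2$. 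Hence $\langle Q_{A,R}\rangle = \pm\tr(\rho Q_A)$ and $\langle Q_{A,R}^2\rangle = \tr(\rho Q_A^2)$, so the variance is the same under either sign.

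The main obstacle is this convention bookkeeping. If the appendix defines $Q_R$ with a minus sign, so that $U^* = e^{-i\theta Q^T}$ and $Q^-$ is really $Q_L + Q_R'$, the roles of $+$ and $-$ must be tracked carefully. The variance identity, however, is insensitive to that sign because variances are invariant under $Q\to -Q$. The one substantive requirement is that $Q_A$ be basis-diagonal, or more generally that the right-copy representation of $Q_A$ have the same moments as $Q_A$ in $\rho$, which the appendix conventions guarantee.
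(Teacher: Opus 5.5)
Your argument is correct and is essentially the paper's. The paper expands $\Var(Q_A^{\pm})$ using $\lAngle (Q_A^L)^2\rAngle=\lAngle (Q_A^R)^2\rAngle=\tr(\rho Q_A^2)$, $\lAngle Q_A^L\rAngle=\lAngle Q_A^R\rAngle=\tr(\rho Q_A)$ and the cross term $\tr(\sqrt{\rho}Q_A\sqrt{\rho}Q_A)$, which cancels on adding exactly as in your parallelogram step; the paper keeps that term explicit only because it reuses it for Lemma~\ref{lem:pmbound} and the skew-information identity. One simplification: with the appendix convention $Q_A^R=\mathbf 1\otimes\overline{Q}_A$, the transpose trick sends $\overline{Q}_A$ to $Q_A^\dagger=Q_A$, so you need neither diagonality of $Q_A$ nor any sign bookkeeping.
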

Thus, it suffices to establish extensive fluctuations in one of the doubled charges. We will show that long-range order in $R_{xy}$, together with sufficiently rapid convergence toward its asymptotic value, implies $\operatorname{Var}(Q_A^+)\,{\sim}\,|A|$. By the decomposition above, this in turn implies extensive $\operatorname{Var}(Q_A)$, which is the signature of scrambled charge.

\begin{theorem}[SWSSB and lower bound on charge variance] \label{thm:main}
  Consider a strongly $\U(1)$-symmetric state $\rho$ with a long-range R\'enyi-1 correlator of the charge transfer operator. Then there exists $c>0$ such that 
  \begin{align}
    \textnormal{Var}(Q_A) \geq \frac{c |A|^2}{\textnormal{Var}_\varphi(Y_A)},
  \end{align}
  where $\textnormal{Var}_\varphi(Y_A)$ is the variance of the SWSSB order parameter in the extremal SSB state.
  In particular, if the R\'enyi-1 correlator converges to its asymptotic value faster than $1/r^d$, where $d$ is the spatial dimension, then the subsystem charge variance scales extensively, i.e., $\textnormal{Var}_\varphi(Y_A) \leq c' |A|$ for some $c'$ such that
  \begin{align}
    \textnormal{Var}(Q_A) \geq (c/c') |A|.
  \end{align}
  This also implies a finite nonzero small-momentum limit of the charge structure factor $S(k \rightarrow 0^+)$. 
  \begin{proof}
    The long-range R\'enyi-1 correlator indicates that $\Vert \sqrt{\rho} \rAngle$ has SSB of $U(1)$ symmetry generated by $Q^+$. The order parameter for this SSB in the purified state can be written as $O_r = b_{r,L}^\dagger b_{r,R}^T$~\footnote{Generalization to any $\cO(1)$ charge transfer case is straightforward.}, which is a local operator carrying charge-2 under $\U(1)^+$ and neutral under $\U(1)^-$. Define the summed order parameter as 
    \begin{align}
      \Phi_A := \sum_{r \in A}  O_r. 
    \end{align}
    To proceed, we define 
    \begin{align}
      X_A := \frac{\Phi_A + \Phi_A^\dagger}{2}, \quad Y_A := \frac{\Phi_A - \Phi_A^\dagger}{2i}.
    \end{align}
    Since $O_r$ carries charge-2 under $Q_A^+$ and neutral under $Q_A^-$, we obtain
    \begin{align}
      [Q_A^+, Y_A ] = -2 i X_A.
    \end{align}
    The Robertson inequality~\cite{Robertson1929} says that for any two operators $A$ and $B$, the expectation value of the commutator $[A,B]$ is bounded by the square-root of the product of the variances of $A$ and $B$. Therefore,
    \begin{align} \label{eq:inequality_proof}
      | \lAngle X_A \rAngle |^2 = \frac{1}{4} | \lAngle [Q_A^+, Y_A ] \rAngle |^2 \leq \textrm{Var}(Q_A^+)  \textrm{Var}(Y_A).
    \end{align}
    In other words, 
    \begin{align} \label{eq:inequality_intermediate}
      \textrm{Var}(Q_A^+) \geq \frac{|\lAngle X_A \rAngle|^2}{\textrm{Var}(Y_A)}.
    \end{align}
    However, this form is not sufficient to get the desired lower bound since our canonically purified state has $\lAngle X_A \rAngle = 0$ and $\textnormal{Var}(Y_A) \sim |A|^2$.

    To obtain a meaningful bound, we want to apply the inequality in \eqnref{eq:inequality_proof} to an \emph{extremal state}. Note that $\Vert \sqrt{\rho} \rAngle$ has $U(1)^+$ SSB while having vanishing local order parameter for $(X_A,Y_A)$. Therefore, the state should be given as the uniform superposition of extremal SSB states $\{ \Vert m, \varphi \rAngle\}$ in which 
    \begin{align}
      \lim_{A \rightarrow \infty} \frac{1}{|A|}  \lAngle m, \varphi \Vert \Phi_A \Vert m, \varphi \rAngle = m e^{i \varphi}.
    \end{align}
    Accordingly, its thermodynamic-limit local expectation values admit the standard phase-average decomposition over extremal SSB states:
    \begin{align}
      \lAngle \cO \rAngle_{\Vert \sqrt{\rho} \rAngle}
      =
      \int \frac{d\varphi}{2\pi}\,
      \lAngle \cO \rAngle_{\Vert m,\varphi \rAngle},
    \end{align}
    for any quasilocal observable $\cO$. Equivalently, the finite-volume symmetric state may be viewed as a uniform superposition of phase-selected states, with cross terms vanishing in the thermodynamic limit because distinct phases become orthogonal. Applying this decomposition to $(Q_A^+)^2$ and $Q_A^+$ for a fixed finite region $A$ gives
    \begin{align}
      \textnormal{Var}(Q_A^+)_{\Vert \sqrt{\rho} \rAngle} = \int \frac{d\varphi}{2\pi} \, \textnormal{Var}_\varphi(Q_A^+),
    \end{align}
    since the expectation value of $Q_A^+$ is the same in all extremal states and in the canonically purified state, so there is no additional phase-mixing contribution to the variance.
    % different extremal SSB states are orthogonal in the thermodynamic limit. 
    Here $\textnormal{Var}_\varphi(Q_A^+)$ is the variance of $Q_A^+$ in the extremal SSB state $\Vert m, \varphi \rAngle$.
    Due to the symmetry, the variance is independent of $\varphi$, so we can identify that 
    \begin{align}
      \textnormal{Var}(Q_A^+)_{\Vert \sqrt{\rho} \rAngle} = \textnormal{Var}_\varphi(Q_A^+).
    \end{align}
    In the extremal SSB state, we have     
    \begin{align}
      |\lAngle X_A \rAngle|^2 = c |A|^2,
    \end{align}
    for some $c\,{>}\,0$. Finally, we use \eqnref{eq:inequality_intermediate} for the phase-selected extremal state and, using Lemma~\ref{lem:decomposition_var}, we obtain the desired lower bound:
    \begin{align}
      \textrm{Var}(Q_A) \geq \frac{1}{4}\textrm{Var}(Q_A^+) \geq \frac{c |A|^2}{\textrm{Var}_\varphi(Y_A)},
    \end{align}
    for some system-size-independent constant $c > 0$. 
    
    For the second part that uses rapid convergence of R\'enyi-1 correlator, consider $Y_r := (O^\vdagger_r - O_r^\dagger)/(2i)$. For simplicity, assume translation invariance, although this is not strictly necessary. The variance of $Y_A$ can be written as the summation of connected correlation functions:
\begin{align}
  \textrm{Var}(Y_A) = |A| \lAngle Y_r^2 \rAngle_c + \hspace{-3pt}\sum_{r \neq r' \in A} \hspace{-3pt} \lAngle Y_r Y_{r'} \rAngle_c.
\end{align}
The first term scales as $|A|$. If the connected correlation function for $Y_r$ decays faster than $1/r^d$, where $d$ is the spatial dimension, the second term also scales as $|A|$:
\begin{align}
  \sum_{r \neq r' \in A} \hspace{-3pt} \lAngle Y_r Y_{r'} \rAngle_c \leq |A| \sum_{r \neq 0} \frac{1}{r^{d+\epsilon}} \leq c' |A|.
\end{align}
Note that the tail behavior of the R\'enyi-1 correlator must match the connected correlators in the extermal state in the SWSSB phase.
Therefore, the variance of $Y_A$ scales as $|A|$, and the subsystem charge variance scales extensively. 

The static structure factor is defined as 
\begin{align}
  S(k) := \frac{1}{V} \langle Q_k Q_{-k} \rangle_\rho,
\end{align}
where we are interested in the $k \rightarrow 0^+$ limit. While $S(0)\,{=}\,0$ for a fixed-charge ensemble, $\lim_{k \rightarrow 0^+} S(k)$ is finite if the subsystem charge variance scales extensively; see \appref{app:structure_factor}.
  \end{proof}
\end{theorem}

\begin{example}[Uniform fixed-charge ensemble]
   Consider the incoherent mixture of all possible charge configurations given a total charge constraint~\cite{LeeU2023}. In fact, for a generic strongly U(1)-symmetric local Lindbladians, this corresponds to the unique steady state~\cite{Yoshida_2024}. For simplicity, we consider hard-core bosons. Let $\nu = Q/V$ be the charge density. Let $d_Q = \binom{V}{Q}$ be the number of configurations with total charge $Q$. Then
   \begin{align}
    \sigma = \frac{1}{d_Q} P^Q = \frac{1}{d_Q} \sum_{ \sum_r \hspace{-2pt} q_r = Q}  \bigotimes_{r} P_r^{q_r}.
   \end{align}
   The ordinary and R\'enyi-1 correlators are given by
   \begin{align}
      C(r) = 0, \quad R(r) = \nu(1-\nu) + \cO(V^{-1}),
   \end{align}
   where $C(r) = \tr \rho T_{xy}$ and $R(r) = \tr \sqrt{\rho} T_{xy} \sqrt{\rho} T_{xy}^\dagger$. Therefore, the state has SWSSB. Furthermore, $R(r)$ converges immediately to its asymptotic value $\nu(1-\nu)$. Therefore, Theorem~\ref{thm:main} applies, and the charge variance should scale extensively.  Indeed, in the thermodynamic limit with $|A|/V\rightarrow 0$, the subsystem charge variance is 
   \begin{align}
    \textrm{Var}(Q_A) = \nu(1-\nu) |A|.
   \end{align}
\end{example}

\begin{example}[Dicke state]
  Consider a pure-state wavefunction given as the superposition of all possible charge configurations given a total charge sector:
  \begin{align}
    |\psi \rangle =  \frac{1}{\sqrt{d_Q}} 
    \sum_{ \sum_r \hspace{-2pt} q_r = Q}  \bigotimes_{r} |q_r \rangle
  \end{align}
  In the thermodynamic limit $V \rightarrow \infty$, for $r \neq 0$,
  \begin{align}
    C(r) = \nu(1-\nu), \quad  R(r) = \nu^2(1-\nu)^2
  \end{align}
  up to $\cO(V^{-1})$ corrections.
  With long-range order in the ordinary correlator, the state has genuine SSB~\footnote{More precisely, this is the  superposition of extremal states as the local order parameter vanishes.}.
  The absence of $r$ dependence implies both $C(r)$ and $R(r)$ rapidly converge to asymptotic values. Therefore, Theorem~\ref{thm:main} applies. Again in the thermodynamic limit, $\textrm{Var}(Q_A) = \nu(1-\nu) |A|$. 
\end{example}

\begin{example}[Superfluid state]
  Fast convergence of R\'enyi-1 correlator is crucial to show the extensive scaling of charge variance. 
  Consider a fixed-charge pure state with $\U(1)$ SSB, i.e., a superfluid or XY ferromagnet. The state has both long-range ordinary and R\'enyi-1 correlators, where both $C(r)$ and $R(r)$ approach their asymptotic values with $1/r^{d-1}$ tail in $d$ spatial dimensions due to gapless Goldstone modes. With the slow tail, Theorem~\ref{thm:main} does not apply.
  Indeed, the state has a subextensive charge variance~\cite{Fluctuations2012}; its static charge structure factor $S(k) \sim |k|$, which vanishes in the $k \rightarrow 0$ limit.
\end{example}

\begin{example}[Dephased superfluid state]
	If we fully dephase the aforementioned fixed-charge superfluid state in the occupation basis, we obtain a strongly symmetric state with the same diagonal elements as the original superfluid state. As this state is diagonal, the off-diagonal correlator $C(r)$ vanishes, and there is no ordinary SSB. On the other hand, for hard-core bosons whose ground-state amplitudes can be chosen nonnegative in the occupation basis, the R\'enyi-1 correlator of the dephased state equals the ordinary one-body correlator of the original superfluid wavefunction. Indeed, writing $|\psi\rangle=\sum_\eta \psi_\eta |\eta\rangle$ with $\psi_\eta\geq 0$ and $\rho_{\rm deph}=\sum_\eta |\psi_\eta|^2|\eta\rangle\langle \eta|$, one obtains
	  \begin{align}
	    R^{\rm deph}_{xy}
	    = \hspace{-10pt} \sum_{ (\eta_x,\eta_y) = (0,1) } \hspace{-10pt} \psi_{\eta^{xy}}\psi_\eta
	    = \langle \psi | b_x^\dagger b_y |\psi\rangle,
	  \end{align}
	  where $\eta^{xy}$ is obtained from $\eta$ by moving one particle from $y$ to $x$. Thus the resulting state still has SWSSB; furthermore, the R\'enyi-1 correlator converges slowly to its asymptotic value, inheriting the same power-law tail as the original superfluid state. Accordingly, Theorem~\ref{thm:main} does not apply. 
  Indeed, the subsystem charge variance is still subextensive, as the dephasing does not change the diagonal elements of the density matrix.  
\end{example}

Finally, we make two important points. First, even if the R\'enyi-1 correlator has a tail decaying slower than $1/r^{d}$, the subsystem charge variance can still scale extensively. Such a state can be constructed by taking a convex combination of the Dicke state and the superfluid state. Second, the converse of Theorem~\ref{thm:main} does not hold, as shown by the following example. 
\begin{example}[Sparse projector state] \label{ex:converse_not_true}
  The main counterexample to the converse is provided by a typical sparse projector inside a fixed-charge sector. Consider hard-core bosons on $V$ sites at fixed total charge $Q=\nu V$, and choose a random subset $\Omega$ of the occupation basis states in the fixed-charge sector, with $V^2 \ll r(V)=|\Omega|$ and $\log r(V)=o(V)$. Then define the mixed state
  \begin{align}
  \rho_\Omega=\frac{1}{r(V)}\sum_{\mathbf n\in\Omega} |\mathbf n\rangle \langle \mathbf n|.
  \end{align}
  As proven in \appref{app:countexample}, this state almost surely has extensive charge variance in the thermodynamic limit. The absence of R\'enyi-1 order comes from a different, local-connectivity criterion. For a diagonal projector, the R\'enyi-1 correlator of a hopping operator $b_i^\dagger b_j^\vdagger$ counts pairs of configurations in the support that are related by this local hopping move. Thus a configuration $\mathbf n\in\Omega$ contributes only if the hopped configuration $T_{j\to i}\mathbf n$ also lies in $\Omega$. Since $\Omega$ contains only subexponentially many configurations inside the exponentially large fixed-charge sector of size $d_Q$, this locally moved partner is almost never present. Quantitatively,
  \begin{align}
  \mathbb E_\Omega R^{(\Omega)}_{ij}
  &\sim \nu(1-\nu) \frac{r(V)}{d_Q}.
  \end{align}
  Equivalently, the expected number of surviving hopping edges is $r(V)\mathbb E_\Omega R^{(\Omega)}_{ij}\sim \nu(1-\nu)r(V)^2/d_Q$, which vanishes as $V \rightarrow \infty$ since $d_Q$ is exponentially in $V$. 
  Since this edge count is a nonnegative integer, it follows that, with probability approaching one, for any fixed pair $i\neq j$,
  \begin{align}
  R_{ij}^{(\Omega)} = 0.
  \end{align}
  More generally, any fixed finite-support $\cO(1)$ charge-transfer operator has a vanishing R\'enyi-1 correlator. The block-charge variance survives because it probes only the distribution of $Q_A$ among the sampled configurations, not the local connectivity of the support under charge-transfer moves. Furthermore, the state has vanishing conditional mutual information for a large enough shielding region; see \appref{app:CMI}. Therefore $\rho_\Omega$ has no SWSSB, despite having extensive subsystem charge variance.
\end{example}
\noindent The representative continuous-symmetry examples discussed above are summarized in Table~\ref{tab:comparison_examples}.

% \vspace{5pt} \noindent {\bf Remark 2.} For discrete global symmetries, localized charge is not well-defined. Instead, for a given subsystem one may define the following quantity. For a discrete Abelian symmetry group $G$, assume $U_g := \prod_i U_{g,i}$. Then one can define a truncated symmetry operator $U_{g,A}:= \prod_{i \in A} U_{g,i}$, which is also called a disorder operator. In this case, one is interested in the R\'enyi-1 correaltor of the disorder operator:
% \begin{align}
%     D_{A} := \tr \big( \sqrt{\rho} U_{g,A}^\vdagger \sqrt{\rho} U_{g,A}^\dagger \big).
% \end{align}
% For $\U(1)$ symmetry, note that $U_{\theta,i} = e^{i q_i \theta}$; accordingly, for small enough $\theta$, we observe that 
% \begin{align}
%     D_{A}^{\rm U(1)} \approx 1 + \theta^2 \tr \big( \sqrt{\rho} Q_A \sqrt{\rho} Q_A \big)
% \end{align}

% \vspace{5pt} \emph{Twist overlap and skew information}.

\begin{figure}[!t]
    \centering
    \includegraphics[width=0.73\linewidth]{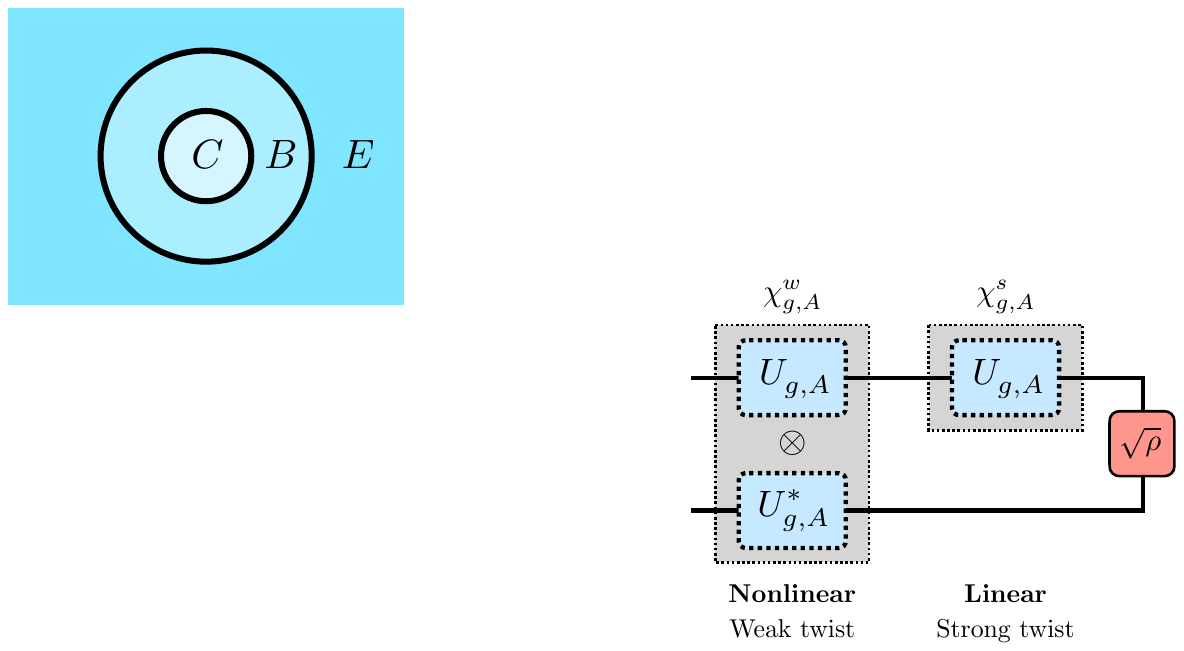}
    \caption{{\bf Strong and weak disorders}. In the canonically purified state, there are disorder operators with respect to strong and weak symmetries. The strong one reduces down to linear observable, while the weak one stays nonlinear quantity.}
    \label{fig:disorder}
    \vspace{-10pt}
\end{figure}

\section{Twist overlap}

In the previous section, we identified conditions under which the subsystem charge variance is extensive, providing a natural diagnostic of charge scrambling. This diagnostic, however, has two issues. First, it does not distinguish whether the fluctuation has purely classical or quantum coherent origin. Second, it is intrinsically tied to continuous symmetries. The underlying idea nevertheless extends to discrete symmetries once one replaces the subsystem charge by the corresponding truncated symmetry action. Indeed, the subsystem charge variance may be viewed as probing the response to the truncated symmetry transformation in \eqnref{eq:truncated_action}. For a group element $g \in G$, let
\begin{align}
  U_{g,A} := \prod_{i \in A} U_{g,i}
\end{align}
denote the symmetry action restricted to a subregion $A$. This operator is often referred to as a disorder operator~\cite{KadanoffCeva1971, Fradkin2016}. We now introduce its R\'enyi-1 analogue, which is visualized in \figref{fig:disorder}.
\begin{definition}[Twist overlap] \label{def:twist_overlap}
  Given a state $\rho$, we define a \emph{twist overlap} $\chi_{g,A}^{s/w}$ on subregion $A$ as
\begin{align}
  \chi^w_{g,A} &:= \lAngle \sqrt{\rho} \Vert U^{\vphantom{*}}_{g,A} \otimes U_{g,A}^* \Vert \sqrt{\rho} \rAngle = \tr\big( \sqrt{\rho} U_{g,A}^\vdagger \sqrt{\rho} U_{g,A}^\dagger  \big), \nonumber \\
  \chi^s_{g,A} &:= \lAngle \sqrt{\rho} \Vert U^{\vphantom{*}}_{g,A} \otimes \mathrm{Id} \Vert \sqrt{\rho} \rAngle= \tr\big( \rho U_{g,A} \big).
\end{align}
Both correspond to the disorder operators in the purified state: $\chi^s$ against strong symmetry, and $\chi^w$ against weak symmetry. 
However, there is an important subtlety. 
Following the discussion in Example.~\ref{ex:converse_not_true}, the volume-law decay of the strong twist overlap $\chi^s_{g,A}$ is not sufficient to diagnose SWSSB. This is contrasted with one-dimensional ground states of local Hamiltonian, where volume-law decay of the disorder operator indicates ordering~\cite{Levin_2020}. On the other hand, the weak twist overlap $\chi^w_{g,A}$ is a genuinely nonlinear diagnostic. In particular, it captures the local symmetry charge indefiniteness originated from coherent part of the fluctuation.
\end{definition}

\begin{example}[Subtlety of the disorder operator]
  Consider the prototypical model studied in Ref.~\cite{Lee2023PRXQ}, namely a two-dimensional spin system under strongly $\mathbb{Z}_2$-symmetric decoherence. 
  Let $U_{A} = \prod_{i \in A} X_i$. 
  Under the Kramers-Wannier duality, $\chi^s_A$ maps to the loop expectation value along the boundary of $A$ in the decohered toric code model~\cite{fan2023diagnostics}. 
  However, upon increasing the decoherence strength, which maps to the onsite $Z$ error, the loop expectation value always decays exponentially with the perimeter of $A$, even in the SWSSB phase. Thus, there is no universal order-vs-disorder dichotomy~\cite{Levin_2020}. On the other hand, $\chi^w$ remains insensitive to the decoherence.
\end{example}

The weak-symmetry twist overlap $\chi^w$ is closely tied to the Wigner-Yanase skew information~\cite{WY1963,Luo2005,Hansen2008}. Given an additive conserved quantity $O$, the Wigner-Yanase skew information is defined as
\begin{align} \label{eq:YW}
  I_{\rm WY}(\rho, O) &: 
  = - \frac{1}{2} \textnormal{tr}\Big( [O, \sqrt{\rho} ]^2 \Big).
\end{align}
It measures the noncommutativity between $\rho$ and $O$, and hence isolates the coherent, or asymmetric, part of the $O$-fluctuation. In particular, it ignores classical spread among different $O$-sectors: $I_{\rm WY}(\rho,O)=0$ whenever $[\rho,O]=0$, while for pure states it reduces to the ordinary variance, $I_{\rm WY}(\rho,O)=\textnormal{Var}_{\rho}(O)$. Under symmetry-respecting ($O$-covariant) channels, it monotonically decreases, acting like a measure of quantum coherence resource. Indeed, it constrains the amount of coherence relative to $O$ that can be distilled~\cite{Marvian_2013}. Finally, the skew information has metrological implications by providing two-sided bounds on the quantum Fisher information~\cite{Gibilisco2009}.

% The twist overlap of the weak symmetry $\chi^w$ is closely tied to the Wigner-Yanase skew information~\cite{WY1963,Luo2005,Hansen2008}. Given an additive conserved quantity denoted as $O$, it is defined as
% \begin{align} \label{eq:YW}
%   I_{\rm WY}(\rho, O) &: 
%   = - \frac{1}{2} \textnormal{tr}\Big( [O, \sqrt{\rho} ]^2 \Big).
% \end{align}
% As a measure of the noncommutativity between $\rho$ and $O$, $I_{\rm WY}(\rho, O)$ ignores classical spread and detects coherence/asymmetry relative to the generator. Being monotone under channels, it has been proposed to quantify the quantum coherence in the resource theory of asymmetry~\cite{Marvian_2013}, providing bound on how much of the coherence with respect to $O$ is distillable; furthermore, the skew information and quantum Fisher information have mutually bounding relation with metrological implication~\cite{Luo2005, Gibilisco2009}.

For $U(1)$ symmetry, where $g$ is parametrized by an angular variable $\theta$, differentiating $\chi^w_{\theta,A}$ gives
\begin{align}
  \rd_\theta \chi^w_A &= i \tr \Big( \sqrt{\rho} e^{i \theta Q_A} \big[Q_A, \sqrt{\rho} \big] e^{-i \theta Q_A}  \Big) \mapsto_{\theta = 0} 0, \nonumber \\
  \rd_\theta^2 \chi^w_A  &= - \tr \Big( \sqrt{\rho} e^{i \theta Q_A} \big[ Q_A, \big[Q_A, \sqrt{\rho} \big] \big] e^{-i \theta Q_A}  \Big) \nonumber \\
  & \mapsto_{\theta = 0} \tr \Big( [Q_A, \sqrt{\rho}]^2 \Big) = -2 I_{\rm WY}(\rho, Q_A).
\end{align}
Applying the form derived in \eqnref{eq:truncated_action} for $Q_A^-$ in the doubled Hilbert space, together with the fact that $\lAngle Q_A^- \rAngle = 0$ for weakly symmetric states,   
% since $\lAngle Q_A^L \rAngle = \lAngle Q_A^R \rAngle = \tr \rho Q_A$, 
we can rewrite the Wigner-Yanase skew information as
\begin{align}
  I_{\rm WY}(\rho, Q_A) = \frac{1}{2} \textnormal{Var}(Q_A^-).
\end{align} 
Therefore, the fluctuation of $Q_A^-$ measures the \emph{coherent} part of the charge variance; somewhat intuitively, the fluctuation of $Q_A^-$ cannot be larger than the fluctuation of $Q_A^+$ as proven in \appref{app:proofs}:
\begin{lemma}[$Q^+$ fluctuations bound $Q^-$ fluctuations]\label{lem:pmbound}
  \begin{align} \label{eq:pmbound}
    \textnormal{Var}(Q^-_A) \leq \textnormal{Var}(Q^+_A).
  \end{align}
\end{lemma}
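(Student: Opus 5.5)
Write $\sigma := \sqrt{\rho}$ and $Q := Q_A$, and let $\lAngle\cdot\rAngle$ denote expectation values in the canonically purified state $\Vert \sqrt{\rho}\rAngle$. The plan is to express both variances as traces involving only $\sigma$ and $Q$, and then compare them.

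First compute the means. By definition $\lAngle Q\otimes \mathbf 1\rAngle = \tr \rho Q$. The same value is obtained for the right-copy charge $Q_R$, because the transpose convention of \appref{app:doubled} turns it into $\tr(\sigma Q^T{}^T\sigma)=\tr\rho Q$. Hence $\lAngle Q^-_A\rAngle=0$, as already noted for weakly symmetric states, and $\lAngle Q^+_A\rAngle = 2\tr\rho Q$.

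Next compute the second moments. The diagonal terms are $\lAngle Q_L^2\rAngle=\lAngle Q_R^2\rAngle=\tr\rho Q^2$. The cross term is the Rényi-1 quantity $\lAngle Q_L Q_R\rAngle=\tr(\sigma Q\sigma Q)$; any sign coming from the convention for $Q_R$ is absorbed into the definition of $Q^\pm$ so that $Q^-$ generates the weak symmetry. This gives
\begin{align}
\textnormal{Var}(Q^+_A) &= 2\tr\rho Q^2 + 2\tr(\sigma Q\sigma Q) - 4(\tr\rho Q)^2, \nonumber\\
\textnormal{Var}(Q^-_A) &= 2\tr\rho Q^2 - 2\tr(\sigma Q\sigma Q).
\end{align}
As a consistency check, these reproduce Lemma~\ref{lem:decomposition_var}, and the second line matches $2I_{\rm WY}(\rho,Q_A)$.

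Subtracting, the inequality \eqref{eq:pmbound} is equivalent to
\begin{align}
\tr(\sigma Q\sigma Q) \geq (\tr \rho Q)^2 .
\end{align}
Since shifting $Q\to Q-\langle Q\rangle$ leaves both variances unchanged, I would first center $Q$ so that $\tr\rho Q=0$. The claim then reduces to $\tr(\sigma Q\sigma Q)\geq 0$, and this holds because
\begin{align}
\tr(\sigma Q\sigma Q)=\tr\big[(\sigma^{1/2}Q\sigma^{1/2})^2\big],
\end{align}
with $\sigma^{1/2}Q\sigma^{1/2}$ Hermitian, so the trace of its square is nonnegative. Equivalently, one can expand in the eigenbasis of $\rho$: $\sum_{ij}\sqrt{p_ip_j}\,|Q_{ij}|^2\geq0$.

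The main obstacle is bookkeeping rather than analysis. One has to check carefully that the transpose and conjugation conventions make the cross term equal to $+\tr(\sigma Q\sigma Q)$ in $Q^+$ and $-\tr(\sigma Q\sigma Q)$ in $Q^-$, rather than the reverse. I would fix this using the weak-symmetry identity $e^{i\theta Q^-}\Vert\sqrt\rho\rAngle=\Vert\sqrt{U\rho U^\dagger}\rAngle$: differentiating it twice at $\theta=0$ yields $\textnormal{Var}(Q^-)=2I_{\rm WY}\ge0$ with the correct sign, and the $Q^+$ formula then follows from Lemma~\ref{lem:decomposition_var}.
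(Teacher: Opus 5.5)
Your proof is correct and is essentially the paper's argument: both write $\Var(Q_A^\pm)$ in terms of $\tr(\rho Q_A^2)$ and $\tr(\sqrt{\rho}\,Q_A\sqrt{\rho}\,Q_A)$, center $Q_A$, and use $\tr\big[(\rho^{1/4}\Delta Q_A\,\rho^{1/4})^2\big]\geq 0$. The only cosmetic difference is that the paper first shows $\Var(Q_A^-)\leq 2\Var(Q_A)$ and then invokes Lemma~\ref{lem:decomposition_var}, whereas you subtract the two expressions directly to get $\Var(Q_A^+)-\Var(Q_A^-)=4\tr(\sqrt{\rho}\,\Delta Q_A\sqrt{\rho}\,\Delta Q_A)$.
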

In the earlier example of the classical fixed-charge ensemble, the charge variance originates entirely from the classical mixing between different local charge configurations. 
As noted in Table~\ref{tab:comparison_examples}, Wigner-Yanase skew information vanishes for them since $[\rho, Q_A]\,{=}\,0$. On the other hand, for pure states, $I_{\rm WY}\,{=}\,\textrm{Var}(Q_A)$, which indicates that $\textrm{Var}(Q_A^-)\,{=}\,\textrm{Var}(Q_A^+)\,{=}\,2 \textrm{Var}(Q_A)$.

% \vspace{5pt} \emph{Discussion}. 
 
\section{Discussion}

In this work, we identified static conditions under which SWSSB entails charge scrambling. The examples show that neither direction is automatic, as expected from the linearity of fluctuation diagnostics and the nonlinearity of SWSSB diagnostics. Extensive charge variance alone does not guarantee SWSSB, because it can arise from classical sampling over block charges without local charge-transfer connectivity in the support. Conversely, SWSSB alone does not force extensive variance unless the R\'enyi-1 correlations approach their nonzero asymptotic value sufficiently rapidly. This characterization should be useful for experimental studies of SWSSB, including recent cold-atom implementations~\cite{wang2026observationstrongtoweakspontaneoussymmetry}.

Furthermore, this provides an important step towards understanding the relation between SWSSB and emergent hydrodynamics in open quantum systems. In a lattice system evolved under a local strongly symmetric Lindbladian, the low-energy charge spectrum is controlled not only by SWSSB but also by the small-momentum static charge structure factor~\cite{LeeU2023}. Field-theory descriptions often phrase the same physics in terms of a susceptibility, which depends on additional dynamical data~\cite{Akyuz_2024,luca2025,Huang2025}. Our results isolate a purely static sufficient condition for the structure-factor ingredient: if a steady state has long-range R\'enyi-1 order whose approach to the nonzero asymptotic value is sufficiently rapid, then it has extensive block-charge variance and hence, under the usual regularity assumptions, a finite nonzero $k \rightarrow 0^+$ structure factor. A natural next question is what dynamical assumptions on the Lindbladian enforce this rapid convergence in the steady state.

Finally, we remark that this result also elucidates the connection between SWSSB and charge sharpening transition in monitored circuits~\cite{PhysRevLett.129.200602, PhysRevX.12.041002}, where local charge indefiniteness plays an important role~\cite{vijay2025holographicallyemergentgaugetheory,Singh_2026}. In particular, identifying the fluctuation of weak symmetry twist overlap with the Wigner-Yanase skew information provides a precise way to separate the classical and quantum contributions to charge scrambling, which can be used to understand the landscape of monitored random circuits.

% Understanding how these quantities behave dynamically, especially near monitored critical points or in systems with multiple competing transfer channels, appears to be a natural next step.
% ALso how are they related to generic operator growth or quantum chaos, as in OTOC-based diagnostics?

\acknowledgments
This paper is dedicated to Dho Eun Kum, with deep gratitude for her constant support throughout this work.
J.Y.L. acknowledges support by the faculty startup grant at the University of Illinois, Urbana-Champaign, and the IBM-Illinois Discovery Accelerator Institute. 

\bibliography{ref}

\clearpage 
\newpage
\onecolumngrid 

\appendix

\tableofcontents

\section{Doubled Hilbert space} \label{app:doubled}

For simplicity, here we assume $G$ is a finite Abelian symmetry group with unitary representation $U_g$ on a Hilbert space $\cH$.  Since $G$  is Abelian, $\cH$ decomposes into charge sectors
\begin{align}
\mathcal{H}
=
\bigoplus_{q \in \widehat{G}} \mathcal{H}_q ,
\end{align}
where $\widehat{G}$ is the character group. A state $|\psi_q \rangle$ with charge $q$ transforms as
\begin{align}
U_g |\psi_q\rangle
=
\chi_q(g) |\psi_q\rangle .
\end{align}
where $\chi_q(g)$ is the character of the representation. For an Abelian group, the irreducible representations are one-dimensional, so $\chi_q(g)$ is a complex number with unit magnitude.

The canonical purification of a density matrix $\rho$ can be regarded as a state in $\mathcal{H}_L \otimes \overline{\mathcal{H}}_R$, where $\overline{\mathcal{H}}$ denotes the conjugate Hilbert space. Explicitly,
\begin{align}
  \Vert \sqrt{\rho} \rAngle := \sum_{i} ( \sqrt{\rho} \otimes \bm{1}) \, |i \rangle_L  |\bar{i} \rangle_R = \sum_i ( \bm{1} \otimes \sqrt{\rho}^T) \, |i \rangle_L  |\bar{i} \rangle_R.
\end{align}
The bar notation in the second copy is used to indicate that the second copy carries the conjugate wavefunction. This is important as $\sum_i |i\rangle |i \rangle$ is not invariant under the basis change $|i\rangle \to V |i\rangle$, while $\sum_i |i\rangle |\bar{i} \rangle$ is invariant:
\begin{align}
  \sum_i V|i \rangle \otimes V^* |\bar{i} \rangle = \sum_{ijj'} V_{ji} |j \rangle \otimes V^*_{j'i}|\bar{j}' \rangle = \sum_{ijj'} V^\vdagger_{ji} |j \rangle \otimes V^\dagger_{ij'}|\bar{j}' \rangle = \sum_j |j \rangle \otimes |\bar{j} \rangle.
\end{align}
More generally, note that $\Vert \bm{1} \rAngle := \sum_i |i \rangle_L |\bar{i} \rangle_R$ satisfies $(O \otimes \bm{1}) \Vert \bm{1} \rAngle = (\bm{1} \otimes O^T) \Vert \bm{1} \rAngle$ for any operator $O$.

If we consider a weak symmetry action $\rho \mapsto U_g^\vdagger \rho U_g^\dagger$, we see that
\begin{align}
  \Vert U_g^\vdagger \sqrt{\rho}U_g^\dagger \rAngle = (U_g^{\vphantom{*}} \otimes U_g^*) \Vert \sqrt{\rho} \rAngle.
\end{align}
Therefore, if we consider a state $| q \rangle \langle q | \mapsto |q \rangle | \bar{q} \rangle$ in the doubled Hilbert space, the state is invariant under the weak symmetry action:
\begin{align}
  U_g \otimes U_g^* | q \rangle | \bar{q} \rangle = \chi_q(g) \chi^*_{q} (g) | q \rangle | \bar{q} \rangle = | q \rangle | \bar{q} \rangle.
\end{align}

When the state has a strong symmetry, the canonically purified state has a well-defined left and right symmetry, with the weak symmetry being the ``diagonal'' subgroup given as $\{ U_g \otimes U_g^* : g \in G \}$.
The enlarged doubled symmetry may be denoted schematically by $G_L \times G_R$.
Thus, in the doubled Hilbert space, the distinction between weak and strong symmetry becomes the distinction between fixing only the diagonal charge and fixing both the diagonal and relative charges.

\section{Additional Proofs} \label{app:proofs}

We use a basis-independent convention for right-copy operators. 
The right Hilbert space is the conjugate Hilbert space 
$\overline{\mathcal H}$. For an operator $O$ on $\mathcal H$, 
let $\overline O$ denote the corresponding operator on 
$\overline{\mathcal H}$, defined by
\begin{align}
  \overline O\,\overline{|\psi\rangle} = \overline{O|\psi\rangle}.
\end{align}
Equivalently, if $O$ is represented by a matrix $O_{ij}$ in some basis, 
then $\overline O$ is represented by $O^*$ on the conjugate basis.

For a Hermitian charge operator $Q_A$, we define
\begin{align}
  Q_A^L := Q_A\otimes \mathbf 1,\qquad
Q_A^R := \mathbf 1\otimes \overline{Q}_A,\qquad
Q_A^\pm := Q_A^L \pm Q_A^R .
\end{align}
With this convention, if $U_\theta=e^{i\theta Q_A}$, then
$\overline{U_\theta}=e^{-i\theta\overline{Q}_A}$, so $Q_A^-$
generates the weak action $U_\theta\otimes \overline{U_\theta}$
on the canonical purification. 
In the occupation-number basis,
where $Q_A$ is diagonal, this reduces to the simpler notation
$Q_A^R=\mathbf 1\otimes Q_A$. We will repeatedly use the identity $\lAngle X \Vert A\otimes \overline B \Vert X \rAngle = \tr(X^\dagger A X B^\dagger)$.

\vspace{5pt} \noindent {\bf Proof of Lemma~\ref{lem:decomposition_var}.}
\begin{proof}
Let $\mu=\tr(\rho Q_A)$ and $\Delta Q_A=Q_A-\mu$. Since $Q_A$ is Hermitian, one has
\begin{align}
\langle\!\langle (Q_A^L)^2\rangle\!\rangle
= \langle\!\langle (Q_A^R)^2\rangle\!\rangle = \tr(\rho Q_A^2),
\end{align}
and
\begin{align}
\langle\!\langle Q_A^L Q_A^R\rangle\!\rangle
= \tr(\sqrt{\rho} Q_A \sqrt{\rho} Q_A).
\end{align}
Moreover,
\begin{align}
\langle\!\langle Q_A^L\rangle\!\rangle
= \langle\!\langle Q_A^R\rangle\!\rangle
= \mu .
\end{align}
Therefore
\begin{align}
\Var(Q_A^+)
= 2\tr(\rho Q_A^2)
+ 2\tr(\sqrt{\rho} Q_A \sqrt{\rho}Q_A)
- 4\mu^2
= 2\Var(Q_A)
+ 2\tr(\sqrt{\rho}\Delta Q_A\sqrt{\rho}\Delta Q_A),
\end{align}
while
\begin{align}
\Var(Q_A^-)
= 2\tr(\rho Q_A^2) - 2\tr(\sqrt{\rho} Q_A \sqrt{\rho}Q_A)
= 2\Var(Q_A) - 2\tr(\sqrt{\rho}\Delta Q_A\sqrt{\rho}\Delta Q_A),
\end{align}
where we used $\langle\!\langle Q_A^-\rangle\!\rangle=0$.
Adding the two identities gives
\begin{align}
\Var(Q_A^+)+\Var(Q_A^-)=4\Var(Q_A),
\end{align}
or equivalently
\begin{align}
\Var(Q_A)=\frac14\Var(Q_A^+)+\frac14\Var(Q_A^-).
\end{align}
\end{proof}

% \begin{proof}
%     To show this, expand 
%     \begin{align}
%     & \textnormal{Var}(Q_A^+) = \lAngle (Q_A^L + Q_A^R)^2 \rAngle -  \lAngle Q_A^L + Q_A^R \rAngle^2 \nonumber \\
%     &= 2 \tr \rho Q_A^2 + 2 \tr \sqrt{\rho} Q_A \sqrt{\rho} Q_A - 4 ( \tr \rho Q_A )^2 \nonumber \\
%     &= 4 \textnormal{Var}(Q_A) - 2 \big( \tr \rho Q_A^2 - \tr \sqrt{\rho} Q_A \sqrt{\rho} Q_A \big) \nonumber \\
%     &= 4 \textnormal{Var}(Q_A) - \lAngle  (Q_A^-)^2 \rAngle = 4 \textnormal{Var}(Q_A) - \textnormal{Var}(Q_A^-).
%     \end{align}
%   In the last line, we used $\lAngle Q_A^- \rAngle = \lAngle Q_A^L - Q_A^R \rAngle = 0$.
% \end{proof}

\vspace{5pt} \noindent {\bf Proof of Lemma~\ref{lem:pmbound}.}
\begin{proof}
  Let $\mu=\tr(\rho Q_A)$ and $\Delta Q_A=Q_A-\mu$. Then
  \begin{align}
    \textnormal{Var}(Q_A^-) &= 2 \tr\rho Q_A^2 - 2 \tr \sqrt{\rho} Q_A \sqrt{\rho} Q_A \nonumber \\
    &= 2\tr \rho (\Delta Q_A)^2  - 2 \tr \sqrt{\rho} \Delta Q_A \sqrt{\rho} \Delta Q_A \nonumber \\
    &\leq 2 \textnormal{Var}(Q_A)
  \end{align}
  where we used $\textnormal{Var}(Q_A) = \tr \rho (\Delta Q_A)^2$. Note that
  \begin{align}
    \tr \big( \sqrt{\rho} \Delta Q_A \sqrt{\rho} \Delta Q_A \big) = \tr \big( ( \rho^{1/4} \Delta Q_A \rho^{1/4} )^2 \big) \geq 0,
  \end{align}
  since $\rho^{1/4} \Delta Q_A \rho^{1/4} $ is a Hermitian operator and $\tr X^\dagger X \geq 0$ for any $X$. Finally, using Lemma~\ref{lem:decomposition_var}, we obtain \eqnref{eq:pmbound}.
\end{proof}

\section{Sparse projector states}

\subsection{Vanishing R\'enyi-1 Order yet Extensive Subsystem Charge Variance}\label{app:countexample}

In this appendix we show that the converse of Theorem~\ref{thm:main} is false: extensive subsystem charge variance does not, by itself, imply SWSSB. The counterexample is a typical sparse classical projector inside a fixed-charge sector.

\vspace{5pt} \noindent {\bf Setup.} Fix a filling fraction $\nu$, and consider a sequence of systems with $V$ sites and total charge $Q=\nu V$. Let
\begin{align}
\mathcal C_Q
=
\Bigl\{
\mathbf n=(n_1,\dots,n_V):
n_r\in\{0,1\},\ \sum_{r=1}^V n_r=Q
\Bigr\},
\qquad
\mathcal H_Q=\mathrm{span}\{\ket{\mathbf n}:\mathbf n\in\mathcal C_Q\},
\end{align}
so that
\begin{align}
d_Q:=\dim \mathcal H_Q=|\mathcal C_Q|=\binom{V}{Q}.
\end{align}
Choose a subset $\Omega\subset \mathcal C_Q$ uniformly at random with cardinality
\begin{align}
|\Omega|=r(V),
\qquad
V^2\ll r(V),
\qquad
\log r(V)=o(V).
\end{align}
where $o(V)$ means subextensive, i.e., $\lim_{V\to\infty} o(V)/V=0$.
We then define the mixed state
\begin{align}
\rho_\Omega
=
\frac{1}{r(V)}
\sum_{\mathbf n\in\Omega}
\ket{\mathbf n}\bra{\mathbf n}.
\end{align}
Since every basis state in the support of $\rho_\Omega$ has the same total charge $Q$, the state is strongly $U(1)$-symmetric.

We will show that, with probability approaching one as $V \rightarrow \infty$, a typical realization $\rho_\Omega$ has the following two properties: (1) $\rho_\Omega$ has extensive subsystem charge variance. (2) $\rho_\Omega$ has no long-range R\'enyi-1 order for any fixed finite-support charge-transfer operator.
Such a typical realization $\rho_\Omega$ provides a counterexample to the converse of Theorem~\ref{thm:main}.

\vspace{5pt} \noindent {\bf Preliminary step: Uniform ensemble.} 
Let $A$ be a subregion with
\begin{align}
|A|=\ell=\alpha V,
\qquad 0<\alpha < 1,
\end{align}
and define the block charge
\begin{align}
Q_A=\sum_{r\in A} n_r.
\end{align}
We first compute the mean and variance of $Q_A$ in the uniform ensemble of fixed-charge sector $\cC_Q$. The uniform distribution on $\cC_Q$ is invariant under permutations of site labels, so all one-site marginals are identical and all two-site marginals for distinct sites are identical. Therefore
\begin{align}
\mathbb E_{\mathcal C_Q}[n_r] = \frac{Q}{V} = \nu,
\qquad
\mathbb E_{\mathcal C_Q}[n_r n_s] = \frac{Q(Q-1)}{V(V-1)}
\qquad (r\neq s).
\end{align}
It follows that
\begin{align}
\mu_A := \mathbb E_{\mathcal C_Q}[Q_A] = \ell\frac{Q}{V} = \ell\nu,
\end{align}
and
\begin{align} \label{eq:uniform_var}
\sigma_A^2 := \Var_{\mathcal C_Q}(Q_A) = m_{2,A} - \mu_A^2 = \sum_{r\in A}\Var(n_r) + \sum_{\substack{r,s\in A\\r\neq s}} \Cov(n_r,n_s) =\ell\frac{Q}{V}\biggl(1-\frac{Q}{V}\biggr)\frac{V-\ell}{V-1},
\end{align}
where $m_{2,A}=\mathbb E_{\mathcal C_Q}[Q_A^2]$ is the population second moment.
% Equivalently, the block-charge distribution is hypergeometric,
% \begin{align}
% p_m
% =
% \frac{\binom{\ell}{m}\binom{V-\ell}{Q-m}}{\binom{V}{Q}},
% \end{align}
% with the same variance $\sigma_A^2$. 
In particular, for $\ell=\alpha V$,
\begin{align}
\sigma_A^2
=
\alpha(1-\alpha)\nu(1-\nu)V+O(1),
\end{align}
so the full fixed-charge sector has extensive block-charge variance.

\vspace{5pt} \noindent {\bf Step I: Extensive subsystem charge variance}.  We want to show that a typical sparse subset $\Omega$ inherits the same extensive scaling. For each $\mathbf n\in\mathcal C_Q$, define
\begin{align}
x_{\mathbf n}:=Q_A(\mathbf n),
\qquad
I_{\mathbf n}:=
\begin{cases}
1,& \mathbf n\in\Omega,\\[3pt]
0,& \mathbf n\notin\Omega.
\end{cases}
\end{align}
Then
\begin{align}
\Tr(\rho_\Omega Q_A)
=
\frac{1}{r}
\sum_{\mathbf n\in\mathcal C_Q}
I_{\mathbf n}x_{\mathbf n},
\qquad
\Tr(\rho_\Omega Q_A^2)
=
\frac{1}{r}
\sum_{\mathbf n\in\mathcal C_Q}
I_{\mathbf n}x_{\mathbf n}^2,
\end{align}
where for brevity we write $r=r(V)$. Since $\Omega$ is a uniformly random subset of size $r$, one has
\begin{align}
\mathbb E_\Omega[I_{\mathbf n}]
=
\frac{r}{d_Q},
\qquad
\mathbb E_\Omega[I_{\mathbf n}I_{\mathbf m}]
=
\frac{r(r-1)}{d_Q(d_Q-1)}
\qquad
(\mathbf n\neq \mathbf m).
\end{align}
The expectation is immediate:
\begin{align}
\mathbb E_\Omega[\Tr(\rho_\Omega Q_A)] = \frac{1}{r}\sum_{\mathbf n\in\mathcal C_Q} \mathbb E_\Omega[I_{\mathbf n}]\,x_{\mathbf n} =
\frac{1}{r}\cdot \frac{r}{d_Q}\sum_{\mathbf n\in\mathcal C_Q} x_{\mathbf n} = \mu_A.
\end{align}
Similarly for the variance, where $y_{\mathbf n}=x_{\mathbf n}-\mu_A$ are the centered values,
\begin{align}
\Var_\Omega(\Tr(\rho_\Omega Q_A)) =
\frac{1}{r^2}
\mathbb E_\Omega\Biggl[
\Bigl(\sum_{\mathbf n} I_{\mathbf n} y_{\mathbf n}\Bigr)^2
\Biggr] 
= 
\frac{1}{r^2}
\Biggl[
\sum_{\mathbf n}\mathbb E_\Omega[I_{\mathbf n}]\,y_{\mathbf n}^2
+
\sum_{\mathbf n\neq \mathbf m}
\mathbb E_\Omega[I_{\mathbf n}I_{\mathbf m}]\,y_{\mathbf n}y_{\mathbf m}
\Biggr]
% = \frac{1}{r^2} \Biggl[ \frac{r}{d_Q}\sum_{\mathbf n} y_{\mathbf n}^2 + \frac{r(r-1)}{d_Q(d_Q-1)} \sum_{\mathbf n\neq \mathbf m} y_{\mathbf n}y_{\mathbf m} \Biggr].
\end{align}
Since $\sum_{\mathbf n} y_{\mathbf n}=0$, we have $\sum_{\mathbf n\neq \mathbf m} y_{\mathbf n}y_{\mathbf m}=-\sum_{\mathbf n} y_{\mathbf n}^2$. Hence
\begin{align}
\Var_\Omega(\Tr(\rho_\Omega Q_A))
&= \frac{1}{r^2}
\Biggl[
\frac{r}{d_Q} -\frac{r(r-1)}{d_Q(d_Q-1)}
\Biggr]
\sum_{\mathbf n} y_{\mathbf n}^2
= \frac{d_Q-r}{rd_Q(d_Q-1)} \sum_{\mathbf n} y_{\mathbf n}^2 = \frac{d_Q-r}{r(d_Q-1)}\,\sigma_A^2.
\end{align}
% From the standard finite-population sampling formulas,
% \begin{align}
% \mathbb E_\Omega\bigl[\Tr(\rho_\Omega Q_A)\bigr]
% =
% \mu_A,
% \qquad
% \Var_\Omega\bigl(\Tr(\rho_\Omega Q_A)\bigr)
% =
% \frac{d_Q-r}{r(d_Q-1)}\,\sigma_A^2.
% \end{align}
Likewise, if we consider a second moment,
% $m_{2,A}:=\mathbb E_{\mathcal C_Q}[Q_A^2]$, 
then
\begin{align}
\mathbb E_\Omega\bigl[\Tr(\rho_\Omega Q_A^2)\bigr]
=
\mathbb E_{\mathcal C_Q}[Q_A^2] =: m_{2,A},
\qquad
\Var_\Omega\bigl(\Tr(\rho_\Omega Q_A^2)\bigr)
=
\frac{d_Q-r}{r(d_Q-1)}\Var_{\mathcal C_Q}(Q_A^2).
\end{align}
which are all written in terms of the expectation values for the uniform distribution.

For a specific realization $\rho_\Omega$, define the subsystem charge variance
\begin{align}
V_A(\Omega)
:=
\Var_{\rho_\Omega}(Q_A)
=
\Tr(\rho_\Omega Q_A^2)-\Tr(\rho_\Omega Q_A)^2.
\end{align}
Since $\sigma_A^2 = m_{2,A}-\mu_A^2$, we have
\begin{align}
|V_A(\Omega)-\sigma_A^2|
\le
\bigl|\Tr(\rho_\Omega Q_A^2)-m_{2,A}\bigr|
+
\bigl|\Tr(\rho_\Omega Q_A)^2-\mu_A^2\bigr|.
\end{align}
In order to bound the probability for $V_A(\Omega)$ to deviate from $\sigma_A^2$ by $\epsilon V$, we use the union bound. By the union bound, 
\begin{align}
& X_\Omega+Y_\Omega \geq a
\subset
\{ X_\Omega \geq \frac{a}{2} \}
\cup
\{Y_\Omega\ge \frac{a}{2}\} \nonumber \\
\Rightarrow & \Pr_\Omega(X_\Omega+Y_\Omega\ge a)
\le
\Pr_\Omega\left(X_\Omega\ge \frac{a}{2}\right)
+
\Pr_\Omega\left(Y_\Omega\ge \frac{a}{2}\right).
\end{align}
where $X_\Omega = | \Tr( \rho_\Omega Q_A^2) - m_{2,A} |$ and $Y_\Omega = |\Tr(\rho_\Omega Q_A)^2-\mu_A^2|$.
Therefore, for any $\varepsilon>0$,
\begin{align}
\Pr_\Omega\left(|V_A(\Omega)-\sigma_A^2|\ge \varepsilon V\right)
&\le
\Pr_\Omega\left(X_\Omega+Y_\Omega\ge \varepsilon V\right)
\le
\Pr_\Omega\left(X_\Omega\ge \frac{\varepsilon V}{2}\right)
+
\Pr_\Omega\left(Y_\Omega\ge \frac{\varepsilon V}{2}\right),
\end{align} 
Since both $\Tr(\rho_\Omega Q_A)$ and $\mu_A$ lie in $[0,\ell]$,
\begin{align}
\bigl|\Tr(\rho_\Omega Q_A)^2-\mu_A^2\bigr|
\le
2\ell\bigl|\Tr(\rho_\Omega Q_A)-\mu_A\bigr|.
\end{align}
Recall that $\ell = \alpha V$. Using the above relation, we can further simplify the inequality as
\begin{align}
\Pr_\Omega\bigl(|V_A(\Omega)-\sigma_A^2|\ge \varepsilon V\bigr)
&\le
\Pr_\Omega\bigl(|\Tr(\rho_\Omega Q_A^2)-m_{2,A}|\ge \varepsilon V/2 \bigr) +
\Pr_\Omega\bigl(|\Tr(\rho_\Omega Q_A)-\mu_A|\ge \varepsilon/(4\alpha)\bigr).
\end{align}
Since $0\le Q_A\le \ell$, we have $0\le Q_A^2\le \ell^2$, and hence $\Var_{\mathcal C_Q}(Q_A^2)\le \ell^4$, where $\ell = \alpha V$. Using this crude bound with  Chebyshev's inequality, we obtain
\begin{align}
\Pr_\Omega\bigl(|\Tr(\rho_\Omega Q_A^2)-m_{2,A}|\ge \varepsilon V/2\bigr)
\le
\frac{4\ell^4}{\varepsilon^2 V^2 r}
=
O\Bigl(\frac{V^2}{r}\Bigr),
\end{align}
and using $\sigma_A^2 = O(V)$ as in \eqnref{eq:uniform_var}, we have
\begin{align}
\Pr_\Omega\bigl(|\Tr(\rho_\Omega Q_A)-\mu_A|\ge \varepsilon/(4\alpha)\bigr)
\le
\frac{16\alpha^2}{\varepsilon^2}
\frac{d_Q-r}{r(d_Q-1)}\sigma_A^2
=
O\Bigl(\frac{V}{r}\Bigr).
\end{align}
Because $V^2\ll r(V)$, both probabilities vanish as $V\to\infty$. Hence, with probability approaching one,
\begin{align}
V_A(\Omega)
=
\sigma_A^2+o(V)
=
\alpha(1-\alpha)\nu(1-\nu)V+o(V).
\end{align}
Thus a typical sparse projector $\rho_\Omega$ has extensive subsystem charge variance.

\vspace{5pt} \noindent {\bf Step II: Absence of R\'enyi-1 order.}
We now show that the same typical state has no long-range R\'enyi-1 order. Consider first the standard single-particle charge-transfer operator
\begin{align}
O_{ij}=c_i^\dagger c_j,
\qquad
i\neq j.
\end{align}
Since $\rho_\Omega$ is diagonal in the occupation basis, there is no ordinary off-diagonal long-range order:
\begin{align}
C_{ij} := \Tr(\rho_\Omega O_{ij})=0.
\end{align}
The relevant nonlinear diagnostic is the R\'enyi-1 correlator
\begin{align}
R^{(\Omega)}_{ij}
=
\Tr\bigl(
O_{ij}\rho_\Omega^{1/2}O_{ij}^\dagger \rho_\Omega^{1/2}
\bigr) =
\frac{1}{r}
\Tr\bigl(
c_i^\dagger c_j
\Pi_\Omega
c_j^\dagger c_i
\Pi_\Omega
\bigr).
\end{align}
where $\Pi_\Omega = \sum_{\mathbf n \in \Omega} | \mathbf{n} \rangle \langle \mathbf{n} |$ thus $\rho^{1/2} = \frac{1}{\sqrt{r}} \Pi_\Omega$.
Let
\begin{align}
S_{ij}
=
\bigl\{
\mathbf n\in\mathcal C_Q:
n_j=1,\ n_i=0
\bigr\},
\end{align}
and for $\mathbf n\in S_{ij}$, let $T_{j\to i}(\mathbf n)=\mathbf n^{j\to i}$ denote the configuration obtained by moving one particle from $j$ to $i$. Then
\begin{align}
c_i^\dagger c_j\ket{\mathbf n}
=
\ket{T_{j\to i}(\mathbf n)}
\qquad
(\mathbf n\in S_{ij}),
\end{align}
and vanishes otherwise. Expanding the trace in the occupation basis yields
\begin{align}
R^{(\Omega)}_{ij}
=
\frac{1}{r}
\sum_{\mathbf n\in S_{ij}}
1_{\mathbf n\in\Omega}\,
1_{T_{j\to i}(\mathbf n)\in\Omega}.
\end{align}
The summation is over only $S_{ij}$ since for other basis states $c_i^\dagger c_j$ simply vanishes. 
Thus $R^{(\Omega)}_{ij}$ counts the number of directed local hopping edges
\begin{align}
\mathbf n\to T_{j\to i}(\mathbf n)
\end{align}
that remain inside $\Omega$, divided by $r$. 
Intuitively, conditioned on $\mathbf n\in\Omega$, the probability that $T_{j\to i}(\mathbf n)\in\Omega$ is
\begin{align}
\frac{r-1}{d_Q-1}\sim \frac{r}{d_Q},
\end{align}
which implies that the R\'enyi-1 correlator gets exponentially suppressed.

To compute the correlator exactly, first note that the number of allowed sources is
\begin{align}
|S_{ij}|
=
\binom{V-2}{Q-1},
\end{align}
since one fixes $n_j=1$ and $n_i=0$, and then chooses the remaining $Q-1$ occupied sites among the other $V-2$ sites. For a fixed $\mathbf n\in S_{ij}$, the two configurations $\mathbf n$ and $T_{j\to i}(\mathbf n)$ are distinct, and the probability that both belong to a uniformly random $r$-element subset $\Omega\subset\mathcal C_Q$ is
\begin{align}
\Pr_\Omega\bigl(
\mathbf n\in\Omega
\text{ and }
T_{j\to i}(\mathbf n)\in\Omega
\bigr)
=
\frac{\binom{d_Q-2}{r-2}}{\binom{d_Q}{r}}
=
\frac{r(r-1)}{d_Q(d_Q-1)}.
\end{align}
Therefore,
\begin{align}
\mathbb E_\Omega\bigl[R^{(\Omega)}_{ij}\bigr]
&=
\frac{1}{r}
|S_{ij}|
\frac{r(r-1)}{d_Q(d_Q-1)}
\nonumber\\
&=
\frac{|S_{ij}|}{d_Q}\frac{r-1}{d_Q-1}.
\end{align}
Using
\begin{align}
\frac{|S_{ij}|}{d_Q}
=
\frac{\binom{V-2}{Q-1}}{\binom{V}{Q}}
=
\frac{Q(V-Q)}{V(V-1)},
\end{align}
we obtain the exact formula
\begin{align}
\mathbb E_\Omega\bigl[R^{(\Omega)}_{ij}\bigr]
=
\frac{Q(V-Q)}{V(V-1)}
\frac{r-1}{d_Q-1}.
\end{align}
Since
\begin{align}
d_Q
=
\binom{V}{Q}
=
\exp\bigl[Vh(\nu)+o(V)\bigr],
\qquad
h(\nu)=-\nu\log\nu-(1-\nu)\log(1-\nu),
\end{align}
and $\log r(V)=o(V)$, we conclude that
\begin{align}
\mathbb E_\Omega\bigl[R^{(\Omega)}_{ij}\bigr]
\sim
\nu(1-\nu)\frac{r}{d_Q}
=
\exp\bigl[-Vh(\nu)+o(V)\bigr].
\end{align}
Thus the ensemble average of the R\'enyi-1 hopping correlator is exponentially small for any pair $(i,j)$.

To show that this is in fact a typical behavior, we define the raw edge count
\begin{align}
Y_{ij}(\Omega)
=
r R^{(\Omega)}_{ij}
=
\sum_{\mathbf n\in S_{ij}}
1_{\mathbf n\in\Omega}\,
1_{T_{j\to i}(\mathbf n)\in\Omega} \in \mathbb{Z}.
\end{align}
A nonzero value of $Y_{ij}$ detects nonzero R\'enyi-1 correlator. From the previous step, it is straightforward that
\begin{align}
\mathbb E_\Omega\bigl[Y_{ij}(\Omega)\bigr]
=
|S_{ij}|
\frac{r(r-1)}{d_Q(d_Q-1)}
\sim
\nu(1-\nu)\frac{r^2}{d_Q}.
\end{align}
Since $Y_{ij}(\Omega)$ is a nonnegative integer and $r \sim \exp[o(V)]$, we obtain
\begin{align}
\Pr_\Omega\bigl(R^{(\Omega)}_{ij}>0\bigr)
=
\Pr_\Omega\bigl(Y_{ij}(\Omega)\ge 1\bigr)
\le
\mathbb E_\Omega\bigl[Y_{ij}(\Omega)\bigr]
=
\exp\bigl[-Vh(\nu)+o(V)\bigr].
\end{align}
A union bound over all $O(V^2)$ ordered pairs $(i,j)$ with $i\neq j$ gives
\begin{align}
\Pr_\Omega\bigl(
\exists\, i\neq j:\ R^{(\Omega)}_{ij}>0
\bigr)
\le
V(V-1)\exp\bigl[-Vh(\nu)+o(V)\bigr]
\to 0.
\end{align}
Therefore, with probability approaching one,
\begin{align}
R^{(\Omega)}_{ij}=0
\qquad
\text{for all }i\neq j.
\end{align}
In particular, a typical sparse projector has no long-range R\'enyi-1 order.

\vspace{5pt} \noindent {\bf Step III: Extension to general finite-support charge-transfer operators.}
The same mechanism kills not only the single-particle hopping operator $c_i^\dagger c_j$, but any fixed finite-support charge-transfer operator. Let $O$ be such an operator, with support size bounded independently of $V$, and suppose that $O$ has no diagonal matrix elements in the occupation basis. Then
\begin{align}
R_1^{(\Omega)}[O]
&=
\Tr\bigl(
O\sqrt{\rho_\Omega}O^\dagger\sqrt{\rho_\Omega}
\bigr) =
\frac{1}{r}
\sum_{\mathbf n,\mathbf m\in\Omega}
\bigl|
\bra{\mathbf m}O\ket{\mathbf n}
\bigr|^2.
\end{align}
Define the directed edge set
\begin{align}
E_O
=
\bigl\{
(\mathbf n,\mathbf m)\in\mathcal C_Q\times\mathcal C_Q:
\bra{\mathbf m}O\ket{\mathbf n}\neq 0
\bigr\}.
\end{align}
Because $O$ acts on only $O(1)$ sites, each source configuration $\mathbf n$ has at most $K_O=O(1)$ targets $\mathbf m$ with nonzero matrix element, so
\begin{align}
|E_O|\le K_O d_Q
\end{align}
for some constant $K_O$ depending only on the operator $O$, but not on $V$. Now $R_1^{(\Omega)}[O]>0$ can occur only if there exists at least one edge $(\mathbf n,\mathbf m)\in E_O$ such that both $\mathbf n,\mathbf m\in\Omega$. Hence
\begin{align}
\Pr_\Omega\bigl(R_1^{(\Omega)}[O]>0\bigr)
&\le
\sum_{(\mathbf n,\mathbf m)\in E_O}
\Pr_\Omega(\mathbf n,\mathbf m\in\Omega)
=
|E_O|\frac{r(r-1)}{d_Q(d_Q-1)}
\le
K_O\frac{r(r-1)}{d_Q-1}.
\end{align}
Since $\log r(V)=o(V)$ but $d_Q$ is exponential in $V$, the right-hand side vanishes as $V\to\infty$. Therefore, with probability approaching one,
\begin{align}
R_1^{(\Omega)}[O]=0
\end{align}
for every fixed finite-support charge-transfer operator $O$.

We conclude that a typical sparse projector $\rho_\Omega$ is strongly $U(1)$-symmetric and has extensive subsystem charge variance, but exhibits neither long-range R\'enyi-1 order nor SWSSB. This provides the desired counterexample to the converse of Theorem~\ref{thm:main}.

\subsection{Conditional mutual information: sparse projectors versus the uniform ensemble.} \label{app:CMI}

Another important diagnostic of SWSSB is the von Neumann conditional mutual information (CMI) $I(C:E|B)$ for a tripartition of the system into a center region $C$, a shielding region $B$, and an environment region $E$:
\[
\centering
    \includegraphics[width=0.2\linewidth]{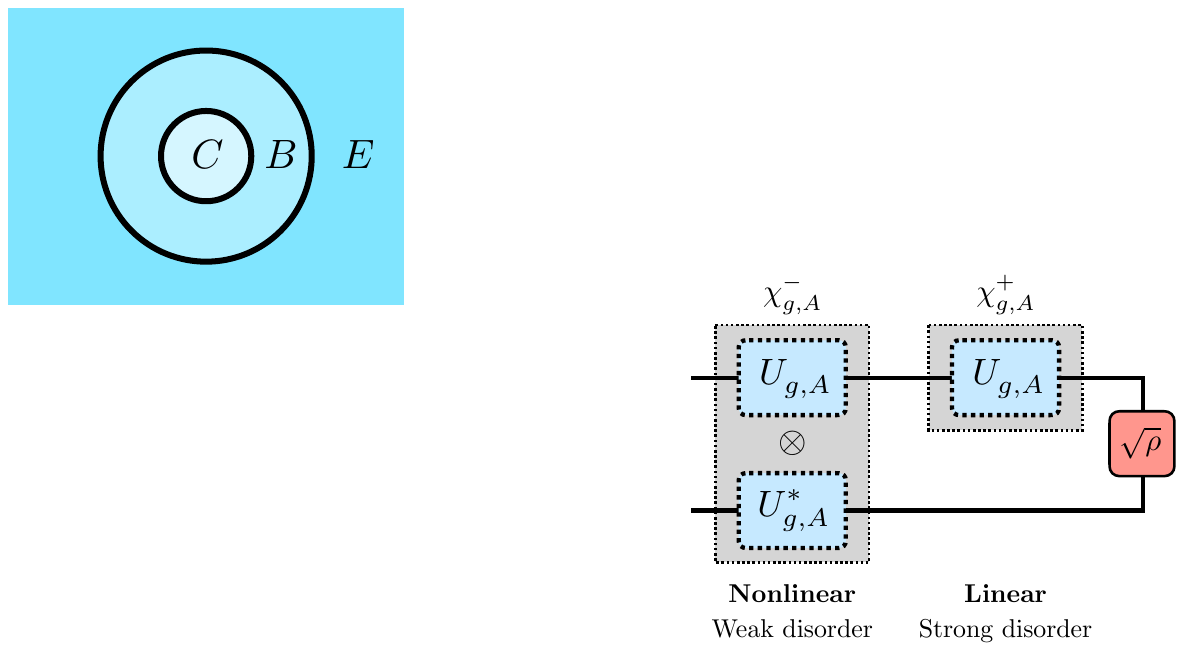}
\]
Following this convention, we partition the system into three regions
\begin{align}
C\sqcup B\sqcup E=\{1,\dots,V\},
\qquad
|C|=c,\quad |B|=b,\quad |E|=e.
\end{align}
Since both $\rho_\Omega$ and $\rho_Q$ are diagonal in the occupation basis, their von Neumann conditional mutual information is equal to the Shannon conditional mutual information of the corresponding classical random variables. We denote
\begin{align}
I(C:E|B)
=
H(C|B)+H(E|B)-H(CE|B).
\end{align}

\vspace{5pt} \noindent {\bf Sparse projector.} For a fixed configuration in $B$ denoted as $\beta \in \{0,1\}^B$, define the set of configurations in $\Omega$ that restrict to $\beta$ on $B$ (``fiber'') as
\begin{align}
\Omega_\beta
:=
\{\mathbf n\in\Omega:\mathbf n|_B=\beta\},
\qquad
K_\beta:=|\Omega_\beta|.
\end{align}
Since $\rho_\Omega$ is classical and uniform on $\Omega$, conditioning on $B=\beta$ gives a uniform distribution on the $K_\beta$ configurations in $\Omega_\beta$. Hence
\begin{align}
H_{\rho_\Omega}(CE\,|\,B=\beta)=\log K_\beta.
\end{align}

Note that
\begin{align}
  H(X) - H(X|Y) = I(X:Y) \leq H(X) \leq H(XY) \quad \Rightarrow \quad I(X:Y) \leq H(XY).
\end{align}
Therefore
\begin{align}
I_{\rho_\Omega}(C:E\,|\,B=\beta)
\le
H_{\rho_\Omega}(CE\,|\,B=\beta)
=
\log K_\beta.
\end{align}
Averaging over $\beta$, and noting that $\Pr_{\rho_\Omega}(B=\beta)=K_\beta/r$, yields
\begin{align}
I_{\rho_\Omega}(C:E\,|\,B)
\le
\frac{1}{r}\sum_{\beta} K_\beta \log K_\beta.
\end{align}
Using the elementary bound $\log K\le K-1$ for $K\ge 1$, we obtain
\begin{align}
I_{\rho_\Omega}(C:E\,|\,B)
\le
\frac{1}{r}\sum_{\beta} K_\beta(K_\beta-1).
\end{align}

The quantity $\sum_\beta K_\beta(K_\beta-1)$ counts the number of ordered pairs of distinct sampled configurations in $\Omega$ that have the same restriction to $B$. Taking the expectation over the random choice of $\Omega$, we find
\begin{align}
\mathbb E_\Omega\! \bigg[\sum_{\beta} K_\beta(K_\beta-1)\bigg]
=
r(V)\bigl(r(V)-1\bigr)\,p_{\mathrm{coll}}^{\neq}(B),
\end{align}
where $p_{\mathrm{coll}}^{\neq}(B)$ is the probability that two distinct uniformly random configurations from the full fixed-charge sector agree on the entire boundary region $B$:
\begin{align}
p_{\mathrm{coll}}^{\neq}(B)
:=
\frac{1}{d_Q(d_Q-1)}
\sum_{\beta} N_\beta(N_\beta-1),
\qquad
N_\beta
:=
\#\{\mathbf n\in\mathcal C_Q:\mathbf n|_B=\beta\} =\binom{V-b}{Q-|\beta|}.
\end{align}
Thus
\begin{align}
\mathbb E_\Omega I_{\rho_\Omega}(C:E\,|\,B)
\le
\bigl(r(V)-1\bigr)\,p_{\mathrm{coll}}^{\neq}(B).
\end{align}

If $b=o(V)$ and the filling $0<\nu<1$ is fixed, the fixed-charge constraint only changes the Bernoulli collision probability at subexponential order in $b$. More precisely, at the exponential-rate level needed below,
\begin{align}
p_{\rm coll}^{\neq}(B)
&=\kappa_\nu^b\exp[o(b)],
\qquad 
\kappa_\nu=\nu^2+(1-\nu)^2<1.
\end{align}
Equivalently,
\begin{align}
\frac{1}{b}\log p_{\rm coll}^{\neq}(B)
&\to \log\kappa_\nu
\end{align}
for any sequence $b=o(V)$ with $b\to\infty$. Hence
\begin{align}
\mathbb{E}_\Omega I_{\rho_\Omega}(C:E|B)
&\le [r(V)-1]\kappa_\nu^b\exp[o(b)].
\end{align}
Since the CMI is nonnegative, Markov's inequality gives
\begin{align}
\Pr_\Omega\left(I_{\rho_\Omega}(C:E|B)>\epsilon\right)
&\le \frac{1}{\epsilon}\mathbb{E}_\Omega I_{\rho_\Omega}(C:E|B).
\end{align}
Therefore, if
\begin{align}
\log r(V)+b\log\kappa_\nu+o(b)
&\to -\infty,
\end{align}
then
\begin{align}
I_{\rho_\Omega}(C:E|B)
&\to 0
\end{align}
in probability. In particular, because $\log r(V)=o(V)$, it is sufficient to choose a shielding region satisfying
\begin{align}
b=o(V),
\qquad 
b\gg\log r(V).
\end{align}
For example, if $r(V)\le V^p$ for some fixed $p$, then $b=C\log V$ with any constant $C>p/|\log\kappa_\nu|$ suffices to make the CMI vanish.

\vspace{5pt} \noindent {\bf Uniform fixed-charge ensemble.}
We now contrast this with the uniform fixed-charge ensemble $\rho_Q$. Fix a boundary pattern $\beta\in\{0,1\}^B$, and let
\begin{align}
m:=|\beta|=Q_B(\beta).
\end{align}
Conditioned on $B=\beta$, the configurations on $C\cup E$ are distributed uniformly subject only to the residual charge constraint
\begin{align}
Q_C+Q_E=Q-m.
\end{align}
Equivalently,
\begin{align}
\Pr_{\rho_Q}(n_C,n_E\,|\,B=\beta)
=
\frac{
\mathbf 1\{|n_C|+|n_E|=Q-m\}
}{
\binom{c+e}{Q-m}
}.
\end{align}
If we first condition on $Q_C=q$, then the configurations on $C$ and $E$ are independent and uniformly distributed in their respective charge sectors:
\begin{align}
\Pr_{\rho_Q}(n_C,n_E\,|\,B=\beta)
=
\sum_q
\Pr_{\rho_Q}(Q_C=q\,|\,B=\beta)\,
\Pr(n_C\,|\,Q_C=q)\,
\Pr(n_E\,|\,Q_E=Q-m-q),
\end{align}
where
\begin{align}
\Pr_{\rho_Q}(Q_C=q\,|\,B=\beta)
=
\frac{\binom{c}{q}\binom{e}{Q-m-q}}{\binom{c+e}{Q-m}}.
\end{align}
This is a hypergeometric distribution.

From this decomposition one finds the exact identity
\begin{align}
I_{\rho_Q}(C:E\,|\,B=\beta)
=
H_{\rho_Q}(Q_C\,|\,B=\beta),
\end{align}
and hence
\begin{align}
I_{\rho_Q}(C:E\,|\,B)
=
H_{\rho_Q}(Q_C\,|\,B).
\end{align}
So the conditional mutual information of the uniform fixed-charge ensemble is exactly the conditional entropy of the charge in region $C$ after conditioning on $B$.

This immediately shows that the CMI of $\rho_Q$ remains nonzero. Indeed, if $c=|C|=O(1)$ is fixed and $b=o(V)$, then the hypergeometric law converges to a binomial law,
\begin{align}
Q_C\,|\,B
\ \Longrightarrow\
\mathrm{Bin}(c,\nu),
\end{align}
and therefore
\begin{align}
I_{\rho_Q}(C:E\,|\,B)
=
H\!\left(\mathrm{Bin}(c,\nu)\right)+o(1).
\end{align}
For a single-site region $C$, this reduces to
\begin{align}
I_{\rho_Q}(C:E\,|\,B)
=
h_2(\nu)+o(1),
\end{align}
where
\begin{align}
h_2(\nu):=-\nu\log\nu-(1-\nu)\log(1-\nu)
\end{align}
is the binary entropy. Thus the CMI of the uniform fixed-charge ensemble remains finite and strictly positive even when the shielding region $B$ is large, provided $b=o(V)$.

More generally, if both $c=|C|$ and $e=|E|$ scale linearly with $V$ while $b=o(V)$, then $Q_C|B$ has variance $\Theta(V)$, so its hypergeometric entropy grows as
\begin{align}
I_{\rho_Q}(C:E\,|\,B)
=
\frac{1}{2}\log V+O(1).
\end{align}
Hence the uniform fixed-charge ensemble retains a parametrically large CMI, in sharp contrast with the sparse projector.

\vspace{5pt} \noindent {\bf Summary.} For the sparse-projector state $\rho_\Omega$, the conditional mutual information is controlled by the total weight of ambiguous boundary fibers, and vanishes once $b\gg \log r(V)$. By contrast, for the uniform fixed-charge ensemble $\rho_Q$, the CMI is exactly
\begin{align}
I_{\rho_Q}(C:E\,|\,B)=H_{\rho_Q}(Q_C\,|\,B),
\end{align}
which remains nonzero, and can even grow logarithmically for macroscopic regions. Thus the sparse-projector counterexample destroys the long-range information-theoretic correlations present in the uniform ensemble, even though both states live in the same global charge sector.

\section{Structure Factor} \label{app:structure_factor}

In this section, we define the structure factor and show how it is connected to the charge variance. Let $n(x,t)$ be a conserved charge density. To discuss $n(x,t)$, one has to specify some underlying dynamics that generates $n(x,t)$ from $n(x)$. Define the fluctuation
\begin{align}
    \delta n(x,t)
    =
    n(x,t)-\langle n\rangle.
\end{align}
In a translationally invariant system, define the Fourier mode
\begin{align}
    \delta n_k(t)
    =
    \int \dd^d x \, e^{-\ii k\cdot x}\delta n(x,t).
\end{align}

\vspace{5pt} \noindent {\bf Structure factor}. The dynamic structure factor is the nonsymmetrized correlation function
\begin{align}
    S(k,\omega)
    =
    \frac{1}{V}
    \int_{-\infty}^{\infty}\dd t \, e^{\ii \omega t}
    \left\langle
    \delta n_k(t)\delta n_{-k}(0)
    \right\rangle.
\end{align}
It measures the fluctuation spectrum of the charge density. It is not itself a response function.

The static (equal-time) structure factor is
\begin{align}
    S(k)  =
    \frac{1}{V}
    \left\langle
    \delta n_k(0) \delta n_{-k}(0)
    \right\rangle.
\end{align}
Physically, $S(k)$ measures equal-time density fluctuations at wavelength $1/k$. For example, it controls the scaling of charge fluctuations in a large region. To see this, consider the charge fluctuation in a block $A$, which is a hypercube with linear size $l$ so that $|A| = l^d$:
\begin{align}
\delta Q^{(A)} := \sum_{x \in A} \delta n_x = \sum_{x \in A} \big( n_x - \langle n_x \rangle \big).
\end{align}
To proceed, introduce the block window function
\begin{align}
w_A(x) :=
\begin{cases}
1,& x \in A,\\
0,& \text{otherwise},
\end{cases}
\qquad
\tilde w_A(k) := \sum_x w_A(x) e^{-ik \cdot x}
= \prod_j  \qty[ e^{-ik_j (l+1)/2}\,\frac{\sin(k_j l/2)}{\sin(k_j/2)} ],
\end{align}
where $A$ is defined for coordinates $1 \leq x_i \leq l$. Define $\delta Q_k := \sum_x \delta n_x e^{-ik \cdot x}$. Then 
\begin{align}
    \delta Q^{(A)} = \sum_x w_A(x) \delta n_x = \frac{1}{V}\sum_k \tilde w_A(k) \delta n_{-k}, \quad  (\delta Q^{(A)})^2 = \frac{1}{V^2} \sum_{k,q} \tilde w_A(k) \tilde w_A(q) \delta n_{-k}  \delta n_{-q},
\end{align}
where $V$ is the volume of the entire system. For a translationally invariant state, $\langle \delta n_k \delta n_q \rangle =: V \delta_{k+q,0} S(k)$. Assuming translational invariance, we obtain that
\begin{align}
\mathrm{Var}(Q^{(A)})
= \frac{1}{V}\sum_k |\tilde w_A(k)|^2 S(k).
\end{align}
Thus the block variance at scale $l$ is a weighted average of $S(k)$ over momenta with
weight $|\tilde w_A(k)|^2$. The window function $|\widetilde w_A(k)|^2$ is sharply peaked near $k=0$, has peak value $|\widetilde w_A(0)|^2=|A|^2=l^{2d}$, and has width of order $1/l$ in each momentum direction. It also satisfies the normalization identity
\begin{align}
    \frac{1}{V}\sum_k |\widetilde w_A(k)|^2 = |A|.
\end{align}
In a fixed-total-charge sector, the exact $k=0$ mode is absent or
has $S(0)=0$. The relevant quantity for large but finite subregions is
therefore the small nonzero momentum limit, $k\to0^+$.

\vspace{5pt} \noindent {\bf (1) $S(k)$ approaches a nonzero constant $\lim_{|k| \rightarrow 0^+} S(k) =: S_0 > 0$.}  Consider a continuum limit. If $S(k)$ is approximately constant over the momentum window $|k|\lesssim 1/l$, then the normalization condition gives
\begin{align}
    \mathrm{Var}(Q^{(A)})
    \approx
    S_0\frac{1}{V}\sum_k |\widetilde w_A(k)|^2
    =
    S_0 |A|.
\end{align}
Equivalently, in the continuum approximation,
\begin{align}
    \mathrm{Var}(Q^{(A)})
    =
    \int \frac{d^d k}{(2\pi)^d}
    S(k)|\widetilde w_A(k)|^2
    \approx
    S_0 l^d .
\end{align}
Thus a finite nonzero $k\to0^+$ structure factor gives a volume-law block-charge variance. Conversely, suppose that the thermodynamic-limit structure factor is locally bounded near $k=0$ and has a regular small-momentum limit. Then the normalized block variance converges to this limit:
\begin{align}
  \lim_{l\to\infty}
\frac{\mathrm{Var}(Q^{(A_l)})}{|A_l|}
= \lim_{|k|\to0^+}S(k).
\end{align}
Thus, under this regularity assumption, a finite positive volume-law coefficient of the block-charge variance is equivalent to a finite nonzero low-momentum static structure factor.

\vspace{5pt} \noindent {\bf (2) $S(k)$ vanishes quadratically $S(k) \sim S_2 k^2$.} In this case, the real-space charge correlation is short-ranged; more precisely, $S_2$ corresponds to the second moment of $\langle \delta n_0 \delta n_r \rangle$. In this case, the integral is not dominated by a small $k$ region; instead, it is dominated by a large $k$ contribution, which generically yields an area-law contribution to the charge fluctuation. Indeed, the IR contribution can be easily shown to give $S_2 l^{d-2}$ contribution, while the UV contribution gives $l^{d-1}$ contribution. To see this, note that $k^2$ factor acts like a discrete differentiation ($\hat k^2=4\sum_j\sin^2(k_j/2)\sim|k|^2$) such that 
\begin{align}
    \frac{S_2}{V} \sum_k k^2 |\tilde w_A(k)|^2 \approx S_2 \sum_r \sum_j |w_A(x + \hat{e}_j) - w_A(x)|^2 \sim S_2 |\rd A|,
\end{align}
since the difference is non-zero only at the boundary of $A$. Physically, when correlations are short-ranged, only degrees of freedom within a correlation length of the boundary can exchange charge with the complement, so number fluctuations are subextensive and boundary controlled.

\vspace{5pt} \noindent {\bf (3) $S(k)$ vanishes linearly $S(k) \sim S_1 |k|$.} In this case, a nonanalytic structure of $S(k)$ indicates that the real-space charge correlation is power-law decaying. Accordingly, on top of area-law charge variance from UV contribution, there is an additional nontrivial IR contribution that scales as $l^{d-1} \log l$~\cite{Fluctuations2012}.
 
Although this gives only subextensive block-charge variance, it can still be sufficient for gaplessness in a local strongly symmetric Lindbladian. The reason is that the variational locality bound on the relaxation rate of a charge mode is controlled by $k^2/S(k)$; hence $S(k)\sim |k|$ gives $\Gamma(k)\lesssim |k|$, which still vanishes as $k\to0$. Thus finite nonzero $S(0+)$ is a sufficient static route to gaplessness, but not a necessary one.

\end{document}